\newcommand\eat[1]{}
\NewDocumentCommand{\onlyletter}{m}
 {
  \tl_set:Nx \l_tmpa_tl { #1 }
  \tl_item:Nn \l_tmpa_tl { -1 }
 }
\newtheorem{theorem}{Theorem}
\newtheorem{corollary}{Corollary}
\newtheorem{lemma}{Lemma}
\newtheorem{problem}{Problem}
\newcommand{\spara}[1]{\smallskip\noindent{\bf #1}}
\newcommand{\squishlist}{
 \begin{list}{$\bullet$}
  {  \setlength{\itemsep}{0pt}
     \setlength{\parsep}{3pt}
     \setlength{\topsep}{3pt}
     \setlength{\partopsep}{0pt}
     \setlength{\leftmargin}{2em}
     \setlength{\labelwidth}{1.5em}
     \setlength{\labelsep}{0.5em}
} }
\newcommand{\squishlisttight}{
 \begin{list}{$\bullet$}
  { \setlength{\itemsep}{0pt}
    \setlength{\parsep}{0pt}
    \setlength{\topsep}{0pt}
    \setlength{\partopsep}{0pt}
    \setlength{\leftmargin}{2em}
    \setlength{\labelwidth}{1.5em}
    \setlength{\labelsep}{0.5em}
} }
\newcommand{\squishdesc}{
 \begin{list}{}
  {  \setlength{\itemsep}{0pt}
     \setlength{\parsep}{3pt}
     \setlength{\topsep}{3pt}
     \setlength{\partopsep}{0pt}
     \setlength{\leftmargin}{1em}
     \setlength{\labelwidth}{1.5em}
     \setlength{\labelsep}{0.5em}
} }
\newcommand{\squishend}{
  \end{list}
}
\newcommand{\NP}{\ensuremath{\mathbf{NP}}\xspace}
\newcounter{ccc}
\DeclareMathOperator*{\argmax}{arg\,max}
\newcommand{\bigO}{\mathcal{O}}
\newcommand\redout{\bgroup\markoverwith
{\textcolor{red}{\rule[.5ex]{2pt}{2pt}}}\ULon}
\newcommand{\algoname}{RIME}
\title{Hyperparametric Robust and Dynamic Influence Maximization}
\author {
    Arkaprava Saha\textsuperscript{\rm 1,\rm 2},
    Bogdan Cautis\textsuperscript{\rm 3},
    Xiaokui Xiao\textsuperscript{\rm 4},
    Laks V.S. Lakshmanan\textsuperscript{\rm 5}
}
\begin{document}

\maketitle

\begin{abstract}
We study the problem of robust influence maximization in dynamic diffusion networks. In line with recent works, we consider the scenario where the network can undergo insertion and removal of nodes and edges, in discrete time steps, and the influence weights are determined by the features of the corresponding nodes and a global hyperparameter. Given this, our goal is to find, at every time step, the seed set maximizing the worst-case influence spread across all possible values of the hyperparameter. 
We propose an approximate solution using multiplicative weight updates and a greedy algorithm, with provable quality guarantees. 
Our experiments validate the effectiveness and efficiency of the proposed methods.
\end{abstract}

%

\section{Introduction}
\label{sec:intro}
Motivated by applications in viral marketing \cite{DR01},  infection detection \cite{leskovec2007cost}, or misinformation mitigation \cite{simpson2022misinformation}, information and influence diffusion over social networks have been extensively studied  \cite{GionisTT13,
TSX15,li2018influence}. 
The interest in and the applicability of studies of information propagation goes beyond  social media, as diffusion phenomena and algorithms for 
optimizing over viral  mechanisms may fit in diverse domains such as  Internet virus spreading \cite{PhysRevLett.86.3200}, 
biological systems \cite{https://doi.org/10.1046/j.1461-0248.2000.00130.x}, failures in power grids \cite{Kinney_2005}, or synchronization in ad-hoc communication networks \cite{6125993}.

There are many ways to model information propagation in diffusion networks. A popular model is Independent Cascades (IC)  \cite{KKT03}, which is \emph{markovian}  (memoryless), in that the \emph{activation probability} of a node only depends on the current state of the system, and \emph{local}, i.e., the activation of one node can only be caused by its neighbors. Given such a stochastic diffusion model, the classic influence maximization (IM) problem aims to find the top-$k$  seed nodes maximizing the expected number of influenced nodes in the network. 
The IC model has been widely and successfully used in many IM settings, effectively capturing the key aspects of diffusion phenomena. 


However, the most widely used diffusion models, including IC, have two major shortcomings that limit their practical applicability. Firstly, they assume that the influence weight (or probability) between each pair of nodes is \textit{known accurately beforehand}. When this assumption does not hold, i.e., the model is uncertain, existing IM algorithms can easily fail: small perturbations in these weights can lead to drastic changes in the solution quality \cite{GBL11, adiga2014sensitivity}, and even approximating the stability of a solution against  small perturbations is intractable \cite{he2014stability}.
However, model uncertainty is unavoidable in practice, since  diffusion probabilities need to be learned from available diffusion cascades. To address this, in recent years \emph{robust IM} has been studied \cite{chen2016robust, he2016robust, ohsaka2017portfolio, anari2019structured}, where 
the goal is to maximize the influence spread over the entire set of possible model instances. 
A particular line of research is where one assumes that each node in the diffusion graph has some features encoding information about it (e.g., age or follower count in a social network). Then, the influence weight between a pair of nodes is a function of their features and a global low-dimensional \emph{hyperparameter} \cite{vaswani2017model, kalimeris2018learning}. This \emph{hyperparametric model} is  intuitive (user characteristics can play an important role in users influencing each other \cite{cialdini2001harnessing, platow2005s}) and computationally light when inferring probabilities, due to a smaller search space (given node features, once we find the right hyperparameter value, all probabilities are known). The inherent uncertainty in model instances arises from the potential values of the hyperparameter, which may not be known exactly, but only within a certain range.

Secondly, most existing IM algorithms are  designed for a \emph{static network}. Yet, in practice, networks are rarely  static. 
 For example, in microblogging (e.g., X a.k.a. Twitter), users may join and leave, may also create connections or remove them, at any time. 
 Similarly, in ad-hoc communication networks between mobile devices (e.g., drones),  which must perform synchronization tasks regularly,  communication links get  made or broken as the nodes move about in an urban landscape. 
 In such scenarios, an IM solution may be required frequently and must be readily available at any given moment. Recent works  \cite{ohsaka2016dynamic, wang2017real, peng2021dynamic}  studied \emph{dynamic IM}, i.e., they  aim to maintain a seed set with high influence spread over time, with just incremental computation per update to the network. Among these, \cite{peng2021dynamic} is the only one to (i) propose a solution whose amortized running time matches that of the SOTA offline IM algorithms (with a poly-logarithmic overhead in the number of nodes) and (ii) provide approximation guarantees on the quality of the returned seed set. 

We address both shortcomings of existing IM algorithms by studying the problem of \textit{robust influence maximization in dynamic diffusion networks}. To our knowledge, we are the first to study this problem, featuring the combination of robustness and dynamicity. A practical challenge in learning any such diffusion model from available cascades is that available data may be limited. Thus, the learning algorithm should have a low sample complexity. 
We adopt the hyperparametric IC model of \cite{ kalimeris2019robust}, which enables us to \eat{further reduce the gap between formal diffusion models and most real-world scenarios, where any realistic method for learning probabilities from cascades should have a low sample-complexity.} meet this requirement. Specifically, we aim to answer the following question: \emph{Under the hyperparametric model, is there a solution to robust IM over dynamic diffusion  networks which, at every time step, can return a seed set with quality guarantees while being significantly more efficient than running IM from scratch?} To this end, we design an approach, called \textbf{R}obust \textbf{I}nfluence \textbf{M}aximization over  \textbf{E}volving networks (\textbf{\algoname}), which runs multiplicative weight updates (MWU) \cite{chen2017robust} on values sampled from the hyperparameter space. For each weight combination, it computes the required seed set by running a greedy algorithm. We give theoretical quality guarantees for this solution and 
empirically show that \algoname\ returns a  good quality seed set much more efficiently than running IM from scratch upon every network update.

\spara{Potential Application Scenarios.} \eat{We view robustness to uncertain diffusion probabilities and dynamicity as two essential aspects of practical IM applications. While our approach remains application-agnostic and conceptually fits any IM setting,} We  highlight three concrete  applications where both uncertainty and dynamicity are prevalent. (1) In preventative health intervention  \cite{DBLP:conf/atal/WilderOHLWPWTR18}, algorithmic techniques are crucial to optimize targeting and delivery within a dynamic diffusion medium that blends virtual social links and real-world relationships. 
\eat{, both inherently subject to uncertainty and dynamicity.} (2) For combating misinformation on social media \cite{budak2011limiting,DBLP:conf/sigmod/Lakshmanan22} through targeted interventions, we need to quickly identify seed sets that are effective. The scale of social networks presents an added challenge, which can be addressed by leveraging sparse network representation \cite{DBLP:conf/kdd/MathioudakisBCGU11} which offers significant speedup in reaching large user groups while effectively mitigating the spread of misinformation. (3) In  synchronizing an ad-hoc communication network over a fleet of drones, links are inherently dynamic.
Drone (UAV) networks consist of drones as nodes, with one-to-one communications links between them  representing the diffusion edges. In such an ad-hoc communication network, a drone is more likely to communicate successfully with those in its vicinity, in a more or less reliable manner depending on distance, position in an urban landscape (e.g., no-fly or no-communication zones), intrinsic capabilities and resources, etc.  Such networks are indeed dynamic: at any time step, nodes may be added or removed (cut out from the world), and communication links between nodes can essentially appear and disappear as well. Also, the exact diffusion weights (e.g., communication reliability) between nodes may be difficult to assess accurately. In this setting, one application scenario we consider is the one of maintaining an overlay of ``cloud drones'' (e.g., high altitude drones) that can locate themselves and can also communicate with the ``basic drones'' (city drones) scattered in the city, in order to assist them in maintaining key information (synchronization, location, state-of-the-world, warnings, etc). Cloud drones may be paired with one or a few underlying basic drones, and by changing location they can change their corresponding (pairing) city drones.  From a dynamic IM perspective, the seed nodes to be maintained  at each time step are the basic drones chosen to be paired with cloud drones above. A diffusion would be initiated at these seed nodes,  and eventually reach as many other nodes as possible, through the ad-hoc communication network, by a viral diffusion process. Therefore, our method for dynamic and robust influence maximization can be applied for the objective of maximizing the ``reach'' of the cloud drones, possibly with a fail-back mechanism for the drones not reached by the viral diffusion.


\spara{Organization \& Contributions.} We review  classic IM  (\S\ref{sec:influence}), and  discuss its robust (\S\ref{sec:robust}) and dynamic (\S\ref{sec:dynamic}) variants. \eat{In the process, we introduce some important notations and concepts used in the remainder of the paper.} \eat{After this,} We formally state our problem, along with some basic properties and results (\S\ref{sec:problem}). In \S\ref{sec:solution}, we propose an approximate solution. For the \textit{incremental} case, where  nodes and edges can only be added to the network, 
our algorithm (\S\ref{sec:algorithm}) is a fusion of the MWU method HIRO from \cite{kalimeris2019robust} with  \cite{peng2021dynamic}'s greedy algorithm. Combining these algorithms is highly non-trivial and is one of our key contributions. \algoname\ enjoys  provable  approximation guarantees (\S\ref{sec:analysis}) going beyond mere extensions of those in \cite{kalimeris2019robust} and \cite{peng2021dynamic}.
\eat{
We stress that these guarantees are not trivial extensions of those in \cite{kalimeris2019robust} and \cite{peng2021dynamic}. }
%
%
We extend \algoname\ for the \textit{general} case, where nodes / edges can also be  removed, 
 and establish the first known theoretical approximation guarantees for this case.  

We empirically evaluate \algoname, comparing with the baseline  which runs from scratch after every network update. On both synthetic and real-world networks, it outperforms the baseline  w.r.t. running time by several orders of magnitude, while returning a solution of comparable quality.

To sum up, we are the first to tackle the combination of  robustness and dynamicity in IM, introducing \algoname, the first algorithm for \emph{robust dynamic IM}. \algoname\ is adaptable and flexible in handling these two crucial aspects. It can handle any degree of model uncertainty, from complete knowledge to scenarios where the hyperparameter lies within a $d$-cube of a given size. In the absence of uncertainty, it seamlessly aligns with the incremental update algorithm from \cite{peng2021dynamic}, while surpassing its capabilities in fully dynamic settings. Furthermore, when dynamicity is not a concern, it gracefully reduces to the HIRO method for robust IM. 
\section{Related Work}
\spara{Influence Maximization (IM).} \cite{DR01} was the first work on IM, with \cite{KKT03} later elegantly formulating the problem along with diffusion models like IC, while also proving the submodularity of the influence spread function. A plethora of works have since focused on efficient IM approximations. A breakthrough in this aspect was achieved by the concept of reverse influence sampling, introduced by \cite{borgs2014maximizing} and later made practical in \cite{TSX15, nguyen2016stop}. Several works also focus on different variants of IM, like competitive IM \cite{BKS07, CNWZ07, HSCJ12, LL15, KLRS21} and opinion maximization \cite{GionisTT13, abebe2018opinion, saha2023voting}. We focus on the robust and dynamic IM variants, prior works on which are discussed next.

\spara{Robust IM.} 
Many recent works aim to devise a solution to the IM problem which is robust to noise and uncertainty in the diffusion model.  \cite{chen2017robust, anari2019structured} aim to find a seed set that maximizes the minimum spread across the set of possible models. However, they assume that the number of possible models is polynomial in the size of the input, which is often not true in practice. \cite{he2016robust, chen2016robust} focus on finding the seed set maximizing the robust ratio, which is the minimum ratio (across all possible models) of the influence spread of a seed set to the optimal influence spread. \cite{kalimeris2019robust} shows that the robust ratio is not a good metric to maximize, as the seed set maximizing it can have a minimum influence spread (across all possible models) a lot worse than the corresponding optimal value. Thus, it focuses on maximizing the minimum influence spread, under the hyperparametric diffusion model, assuming that the network is static.

\spara{Dynamic IM.} \cite{chen2015influential, ohsaka2016dynamic, wang2017real, aggarwal2012influential, liu2017shoulders, peng2021dynamic} are among the works that focus on IM in dynamic networks.  \cite{chen2015influential} proposes an upper bound interchange method with an approximation ratio of $0.5$. \cite{ohsaka2016dynamic} achieves a $(1 - 1/e)$-approximation via an algorithm based on reverse influence sampling. \cite{wang2017real}, inspired by streaming submodular maximization, designs an algorithm that maintains a constant approximate solution and is efficient in practice. However, \eat{all these works provide heuristic solutions and} none of them has rigorous theoretical guarantees on the amortized running time. \cite{peng2021dynamic} is the first work to provide such guarantees, albeit under the assumption that the influence weights are known accurately. However, it does not provide any empirical evaluation of its proposed solution.
\section{Preliminaries}
\label{sec:prelims}
A diffusion network is a graph $G = (V, E)$ where $V$ is the set of nodes  and $E$ represents the connections between them. We denote $|V|$ and $|E|$ by $n$ and $m$ respectively.

\subsection{Diffusion Models and Influence Functions}
\label{sec:influence}

\spara{Independent Cascade (IC) Model.} To describe diffusion between nodes in a network, various models have been used in the literature. A commonly used one is  IC, which describes a discrete-step stochastic process through which information spreads in a network, from a set of initially active nodes. Each node can be active or inactive and each edge $e \in E$ in the network is associated with some probability $p_e$. All nodes begin as inactive. At time step $t = 0$, a subset of nodes $S \subseteq V$ (the \emph{seeds}) is chosen and becomes active. At each step $t + 1$, every node $u$ that became active at  step $t$ samples to influence each of its non-active neighbors $v$, independently with probability $p_{(u,v)}$. Activations are irreversible; a diffusion ends when there are no new activations. 

\spara{Influence Functions.} For a graph $G = (V, E)$ and vector of edge probabilities $\mathbf{p} \in [0, 1]^m$, the influence function $\sigma_{\mathbf{p}} : 2^V \rightarrow \mathbb{R}$ measures the expected number of nodes that will become influenced in the graph $G$ for a seed set $S \subseteq V$:
\begin{small}
\begin{equation*}
    \sigma_{\mathbf{p}}(S) = \sum_{A \subseteq E} r_A(S) \prod_{e \in A} p_e \prod_{e \notin A} \left( 1 - p_e \right)
\end{equation*}
\end{small}
\noindent where $r_A(S)$ denotes the number of nodes  reachable in $G$ from $S$ using only edges from $A$. An important property of $\sigma_{\mathbf{p}}$ is that it is non-decreasing and submodular\footnote{\eat{In short, }A set function $f$ is non-decreasing if \eat{adding an element to a set can never decrease the output of $f$ on it }for any set $A$ and $x \notin A$, $f(A \cup \{x\}) \geq f(A)$. Furthermore, $f$ is said to be submodular if \eat{adding an element to a set provides less additional benefit when the set already contains some other elements; in other words, }for any $A \subseteq B$ and $x \notin B$, $f(A \cup \{x\}) - f(A) \geq f(B \cup \{x\}) - f(B)$.} for any $\mathbf{p}$. 

\spara{Influence Maximization (IM).} Given a network $G$, a probability vector $\mathbf{p}$, and a limited budget of $k$ seed nodes, the classic IM problem aims to find the seed set
\begin{small}
\begin{equation*}
    S^* = \argmax_{S \subseteq V : |S| \leq k} \sigma_{\mathbf{p}} \left( S \right)
\end{equation*}
\end{small}
In \S\ref{sec:robust} we discuss robust IM, where the probability vector is not fixed and instead is assumed to belong to a given set of vectors. In \S\ref{sec:dynamic} we discuss dynamic IM, where the diffusion network can change with time.

\subsection{Robust Influence Maximization}
\label{sec:robust}
Given a diffusion network and a \textit{set} of possible edge probability vectors, the robust IM problem aims to find a seed set of size $k$ that has high spread value for every possible influence function that corresponds to a possible probability vector. One way of defining the set of possible probability vectors is via a hyperparametric model \cite{kalimeris2019robust}. In this model, each node is associated with a vector of features, encoding information about it. The feature vector of each edge $e = (u, v)$, denoted $x_e \in X \subseteq [-1, 1]^d$, is obtained by the concatenation of the $d/2$-dimensional feature vectors of its endpoints $u$ and $v$.  The diffusion probability of the edge $e$ is then a function of its feature vector $x_e$ and a global hyperparameter $\theta \in \Theta = [-B, B]^d$ for some constant radius $B \in \mathbb{R}^+$. In other words, there exists a function $H : \Theta \times X \rightarrow [0, 1]$ such that $p_e = H(\theta, x_e)$. In that case, the set of possible probability vectors is given by $\mathcal{H} = \left\{ \left(H(\theta, x_e)\right)_{e \in E} : \theta \in \Theta \right\}$. Given a set $\mathcal{H}$ of probability vectors,  robust IM under the hyperparametric model aims to find the seed set
\begin{small}
\begin{equation*}
    S^* = \argmax_{S \subseteq V : |S| \leq k} \min_{\mathbf{p} \in \mathcal{H}} \sigma_{\mathbf{p}} \left( S \right)
\end{equation*}
\end{small}

In this paper, we focus on generalized linear hyperparametric models that are $1$-Lipschitz with respect to the $L_1$ norm, following those used in the IM literature \cite{vaswani2017model, kalimeris2018learning}. For such models, we can reduce robust continuous optimization over $\Theta$ to a robust discrete one over some sampled hyperparameter values. Examples of such models include linear with $H(\theta, x_e) = \theta^T x_e$, logistic with $H(\theta, x_e) = 1/ [1 + \exp(- \theta^T x_e)]$, and probit with $H(\theta, x_e) = \Phi(\theta^T x_e)$, where $\Phi$ is the CDF of the standard Gaussian distribution.

\subsection{Dynamic Diffusion Networks}
\label{sec:dynamic}

In diffusion scenarios, networks are mostly non-static, they can change with time in a variety of ways. As in \cite{peng2021dynamic}, we assume that at any time step $t$ one of the following changes may occur: (1) a node is inserted, (2) an edge is inserted, (3) an edge is removed, or (4) a node is removed. In the \emph{incremental setting}, only the first two types of changes are allowed. In other words, at each time step, a node or an edge can only be inserted to (not removed from) the network. In a \emph{fully dynamic setting}, all  types of changes are allowed.

Let $G_t = (V_t, E_t)$ denote the diffusion network at time $t$. Given a probability vector $\mathbf{p}$, the dynamic IM problem aims to find, for each time step $t$, the seed set
\begin{small}
\begin{equation*}
    S_t^* = \argmax_{S_t \subseteq V_t : |S_t| \leq k} \sigma_{\mathbf{p}} \left( S_t \right)
\end{equation*}
\end{small}
\textbf{Remark:} As in previous works, we assume that time is discretized, i.e., 
the actual times at which network updates occur can be mapped to discrete time steps.

\subsection{Problem Statement and Hardness}
\label{sec:problem}

We are now ready to formally state our problem of robust influence maximization in dynamic social networks.

\begin{problem}
\label{prob:rob_dyn_im}
    For a  budget $k$ and a dynamic diffusion network whose instance at time $t$ is $G_t = (V_t, E_t)$  and $\mathcal{H}_t = \left\{ \left( H \left( \theta, x_e \right) \right)_{e \in E_t} : \theta \in \Theta \right\}$ is the set of possible edge probability configurations, find for all time steps $t$ the seed set
    \begin{small}
    \begin{equation*}
        S_t^* = \argmax_{S_t \subseteq V_t : |S_t| \leq k} \min_{\mathbf{p} \in \mathcal{H}_t} \sigma_{\mathbf{p}, t} \left( S_t \right)
    \end{equation*}
    \end{small}
    where $\sigma_{\mathbf{p}, t} \left( S_t \right)$ is the expected influence spread of $S_t$ in $G_t$ under the edge probability configuration $\mathbf{p}$.
\end{problem}
\noindent Clearly, hardness results similar to those of robust IM or  dynamic IM also hold here. Specifically, 
at each time step, it is \NP-hard to (1) get any approximation better than $(1 - 1/e)$ and (2) find a seed set of size $k$ that obtains any approximation better than $\bigO(\log n)$ \cite{kalimeris2019robust}. So we look for an $(\alpha, \beta)$ bi-criteria approximation, where $\alpha, \beta \in (0, 1)$; i.e. a seed set $\hat{S}_t$ s.t. $\beta | \hat{S}_t | \leq k$ and
\begin{small}
\begin{equation*}
    \min_{\mathbf{p} \in \mathcal{H}_t} \sigma_{\mathbf{p}, t} \left( \hat{S}_t \right) \geq \alpha \max_{S_t \subseteq V_t : |S_t| \leq k} \min_{\mathbf{p} \in \mathcal{H}_t} \sigma_{\mathbf{p}, t} \left( S_t \right)
\end{equation*}
\end{small}

When the hyperparametric model and the set of features are clear from the context, the probability vector $\mathbf{p}$ is defined entirely by the hyperparameter $\theta$. In such cases, for ease of understanding, we shall use $\sigma_{\theta, t}$ instead of $\sigma_{\mathbf{p}, t}$.
\section{Solution}
\label{sec:solution}
In this section, we propose an approximate solution to Problem \ref{prob:rob_dyn_im} for both the incremental and the fully dynamic settings. We discuss the overall algorithm in \S\ref{sec:algorithm} and conduct a theoretical analysis of its quality and running time in \S\ref{sec:analysis}. 

\begin{algorithm}[t!]
	\caption{\textsc{Restart}}
	\begin{algorithmic}[1]
        \STATE $\delta \gets n_0 \left( \frac{\delta_1}{(16(2n_0)^{Tk})} \right) ^{\frac{12n_0^2}{k^3}}$
        \STATE $R \gets k\epsilon_1^{-2} \log \left( \frac{n_0}{\delta} \right)$
        \FOR{$i = 1$ \TO $l$} \label{line:begin_p}
        \STATE $K_i \gets 0$
        \REPEAT
        \STATE Sample RR set with hyperparameter $\theta_i$ uniformly over $V_t$
        \STATE $K_i \gets K_i + 1$
        \UNTIL{$Rm_0$ edges are checked}
        \STATE $p_i \gets \frac{K_i}{n_0}$
        \ENDFOR \label{line:end_p}
        \FORALL{$(u, v) \in V \times V$}
        \STATE $I(u, v) \gets \emptyset$
        \ENDFOR
        \FOR{$i = 1$ \TO $l$} \label{line:begin_gen_rr}
        \FORALL{$v \in V$}
        \STATE Generate RR set $R_{v,est,i}$ with probability $p_i$
        \STATE Generate RR set $R_{v,cv,i}$ with probability $p_i$
        \FORALL{$u \in R_{v,cv,i}$}
        \STATE $I(u, v) \gets I(u, v) \cup \{i\}$
        \ENDFOR
        \ENDFOR
        \ENDFOR \label{line:end_gen_rr}
        \FORALL{$(u, v) \in V \times V$} \label{line:begin_find}
        \FORALL{$i \in I(u, v)$}
        \STATE $E_{cv, i} \gets E_{cv, i} \cup \{(u, v)\}$
        \ENDFOR
        \STATE \textsc{Find-Seeds}($(u, v)$)
        \ENDFOR \label{line:end_find}
	\end{algorithmic}
	\label{alg:restart}
\end{algorithm}

\begin{algorithm}[t!]
	\caption{\textsc{Find-Seeds}}
	\begin{algorithmic}[1]
		\REQUIRE $(u', v')$
		\ENSURE Seed set
        \FOR{$j = 1$ \TO $T$}
        \FOR{$i = 1$ \TO $l$}
        \STATE $w_j[i] \propto \exp \left( -\eta \sum_{\tau=1}^{j-1} f_{cv,i} (S^{(\tau)}) \right)$
        \ENDFOR
        \STATE $S^{(j)} \gets$ Greedy($\langle w_j, f_{cv} \rangle, (u', v')$) \label{line:greedy}
        \ENDFOR
        \FOR{$i = 0$ \TO $\lceil \epsilon^{-1} \log n \rceil$}
        \STATE $S_i \gets$ Seed set computed by \textsc{Greedy} for thread $i$ with uniform weights
        \ENDFOR
        \RETURN Uniform distribution over $\left\{ S^{(1)}, \ldots, S^{(T)} \right\}$
	\end{algorithmic}
	\label{alg:find_seeds}
\end{algorithm}

\subsection{Algorithm}
\label{sec:algorithm}
We first provide a brief overview of our method. Let $n_0$ and $m_0$ respectively denote the initial number of nodes and edges in our network. As mentioned in the last paragraph of \S~\ref{sec:robust}, we first sample $l$ values $\theta_1, \ldots, \theta_l$ from the hyperparameter space $\Theta$. Then, for each sampled hyperparameter, following the SOTA in IM, we sample a few reverse reachable (RR) sets and solve a maximum coverage problem on those sets, thereby computing the seed set for the initial network. After that, when a node or an edge is added to or removed from the network, we generally run some incremental computations, e.g. modifying the previously generated RR sets and hence the seed sets. However, if some conditions are satisfied after the network update (e.g., the size of the network is doubled or halved), we restart from scratch. In the next  paragraphs, we explain our method in more detail.

\eat{
\begin{figure*}[t!]
    \centering
     \hspace{0.5cm}
    \includegraphics[scale=0.255]{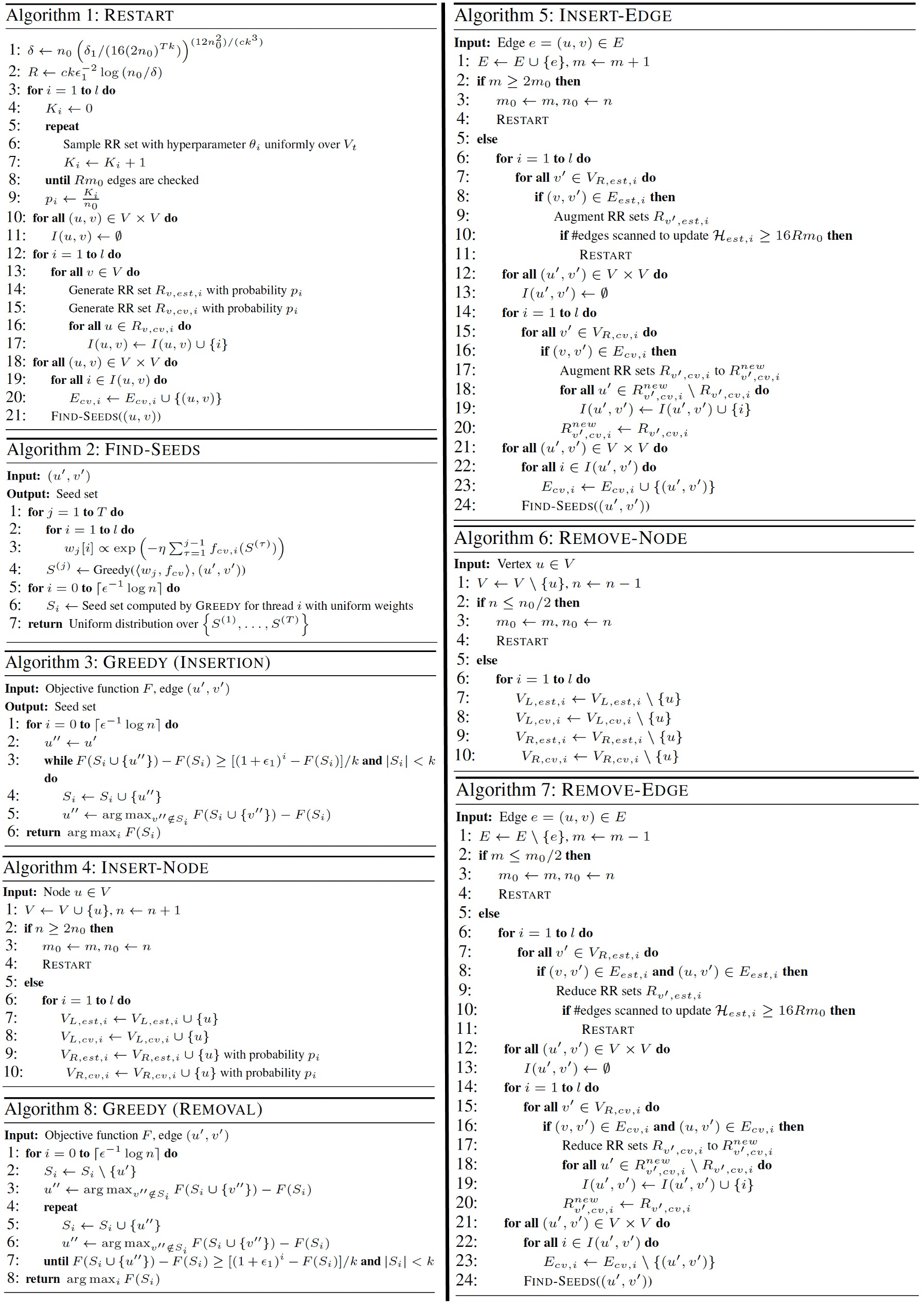}
\end{figure*}
}



\begin{algorithm}[t!]
	\caption{\textsc{Greedy (Insertion)}}
	\begin{algorithmic}[1]
		\REQUIRE Objective function $F$, edge $(u', v')$
		\ENSURE Seed set
        \FOR{$i = 0$ \TO $\lceil \epsilon^{-1} \log n \rceil$}
        \STATE $u'' \gets u'$
        \WHILE{$F(S_i \cup \{u''\}) - F(S_i) \geq [(1 + \epsilon_1)^i - F(S_i)] / k$ \AND $|S_i| < k$} \label{line:cond_k}
        \STATE $S_i \gets S_i \cup \{u''\}$
        \STATE $u'' \gets \argmax_{v'' \notin S_i} F(S_i \cup \{v''\}) - F(S_i)$
        \ENDWHILE
        \ENDFOR
        \RETURN $\arg\max_{i} F(S_i)$
	\end{algorithmic}
	\label{alg:greedy_insert}
\end{algorithm}

\begin{algorithm}[t!]
	\caption{\textsc{Insert-Node}}
	\begin{algorithmic}[1]
		\REQUIRE Node $u \in V$
        \STATE $V \gets V \cup \{u\}$, $n \gets n + 1$
        \IF{$n \geq 2n_0$}
        \STATE $m_0 \gets m$, $n_0 \gets n$
        \STATE \textsc{Restart}
        \ELSE
        \FOR{$i = 1$ \TO $l$}
        \STATE $V_{L,est,i} \gets V_{L,est,i} \cup \{u\}$
        \STATE $V_{L,cv,i} \gets V_{L,cv,i} \cup \{u\}$
        \STATE $V_{R,est,i} \gets V_{R,est,i} \cup \{u\}$ with probability $p_i$
        \STATE $V_{R,cv,i} \gets V_{R,cv,i} \cup \{u\}$ with probability $p_i$
        \ENDFOR
        \ENDIF
	\end{algorithmic}
	\label{alg:insert_node}
\end{algorithm}

When our method runs for the very first time, for each $i \in [l]$, it invokes Alg. \ref{alg:restart} -- \textsc{Restart} where, from each node $v$, it generates two RR sets $R_{v,est,i}$ and $R_{v,cv,i}$ (lines \ref{line:begin_gen_rr}-\ref{line:end_gen_rr}), each with probability $p_i$ (computed in lines \ref{line:begin_p}-\ref{line:end_p}). \footnote{Note that the value of $p_i$ computed in Alg. \ref{alg:restart} can be greater than $1$. In that case, from each node $v$, we generate $\lfloor p_i \rfloor$ RR sets with probability 1 and an additional RR set with probability $p_i - \lfloor p_i \rfloor$.}
These collections of RR sets are modelled as bipartite coverage graphs $\mathcal{H}_{est,i} = (V_{L,est,i}, V_{R,est,i}, E_{est,i})$ and $\mathcal{H}_{cv,i} = (V_{L,cv,i}, V_{R,cv,i}, E_{cv,i})$, where $(u, v) \in E_{est,i}$ (resp. $E_{cv,i}$) if and only if $u \in R_{v,est,i}$ (resp. $R_{v,cv,i}$).\footnote{We generate two bipartite graphs to decouple the correlation between estimating the influence spread of a seed set and estimating the time required to sample an RR set \cite{peng2021dynamic}. $\mathcal{H}_{cv,i}$ is used for the former, while $\mathcal{H}_{est,i}$ is used for the latter.} Clearly, the number of RR sets $R_{v,cv,i}$ covered by a seed set $S$ is equal to the number of nodes in $V_{R,cv,i}$ adjacent to $S$ in $\mathcal{H}_{cv,i}$. This number, normalized (divided) by $p_i$, is denoted by $f_{cv,i}(S)$ and is used to estimate the true influence spread $\sigma_{\theta_i}(S)$ of $S$ for hyperparameter $\theta_i$.

\begin{algorithm}[t!]
	\caption{\textsc{Insert-Edge}}
	\begin{algorithmic}[1]
		\REQUIRE Edge $e = (u, v) \in E$
        \STATE $E \gets E \cup \{e\}$, $m \gets m + 1$
        \IF{$m \geq 2m_0$}
        \STATE $m_0 \gets m$, $n_0 \gets n$
        \STATE \textsc{Restart}
        \ELSE
        \FOR{$i = 1$ \TO $l$}
        \FORALL{$v' \in V_{R,est,i}$}
        \IF{$(v, v') \in E_{est,i}$}
        \STATE Augment RR sets $R_{v',est,i}$
        \IF{\#edges scanned to update $\mathcal{H}_{est,i} \geq 16Rm_0$}
        \STATE \textsc{Restart}
        \ENDIF
        \ENDIF
        \ENDFOR
        \ENDFOR
        \FORALL{$(u', v') \in V \times V$}
        \STATE $I(u', v') \gets \emptyset$
        \ENDFOR
        \FOR{$i = 1$ \TO $l$}
        \FORALL{$v' \in V_{R,cv,i}$}
        \IF{$(v, v') \in E_{cv,i}$}
        \STATE Augment RR sets $R_{v',cv,i}$ to $R_{v',cv,i}^{new}$
        \FORALL{$u' \in R_{v',cv,i}^{new} \setminus R_{v',cv,i}$}
        \STATE $I(u', v') \gets I(u', v') \cup \{i\}$
        \ENDFOR
        \STATE $R_{v',cv,i}^{new} \gets R_{v',cv,i}$
        \ENDIF
        \ENDFOR
        \ENDFOR
        \FORALL{$(u', v') \in V \times V$}
        \FORALL{$i \in I(u', v')$}
        \STATE $E_{cv, i} \gets E_{cv, i} \cup \{(u', v')\}$
        \ENDFOR
        \STATE \textsc{Find-Seeds}($(u', v')$)
        \ENDFOR
        \ENDIF
	\end{algorithmic}
	\label{alg:insert_edge}
\end{algorithm}

We now discuss how to find the seeds via dynamic maximum coverage on the $\mathcal{H}_{cv,i}$s. As shown in lines \ref{line:begin_find}-\ref{line:end_find} of Alg. \ref{alg:restart}, whenever an edge is added to the $\mathcal{H}_{cv,i}$s (i.e., a node is added to the corresponding RR sets), the seed set also needs to be updated by calling Alg. \ref{alg:find_seeds} -- \textsc{Find Seeds}. Along the lines of HIRO, 
Alg. \ref{alg:find_seeds} runs $T$ iterations of multiplicative weight updates (MWU), each computing a candidate seed set after running an approximation algorithm to maximize a convex combination $F$ of the sampled influence functions (line \ref{line:greedy});  higher weights are given to those with poor performance in previous iterations.
Similar to \cite{peng2021dynamic}, the approximation algorithm used in line \ref{line:greedy} of Alg. \ref{alg:find_seeds}, which is shown in Alg. \ref{alg:greedy_insert} -- \textsc{Greedy (Insertion)}, maintains a seed set for each of the $\lceil \epsilon_1^{-1} \log n \rceil$ threads, where thread $i$ corresponds to the estimate $(1 + \epsilon_1)^i$ of the maximum value of $F$. Alg. \ref{alg:greedy_insert} adds to each thread's seed set any node with marginal gain above a threshold. After that, it returns the seed set maximizing $F$. In the end, after getting the candidate seed sets using MWU, Alg. \ref{alg:find_seeds} returns one of those sets uniformly at random, after updating each thread's seed set to be the last one computed by Alg. \ref{alg:greedy_insert}, with $F$ having uniform weights.

After the initial run, when a node (resp. edge) is inserted to the network, Alg. \ref{alg:insert_node} -- \textsc{Insert Node} (resp. \ref{alg:insert_edge} -- \textsc{Insert Edge}) is invoked. As shown in the pseudocode, the entire process restarts from scratch (invokes Alg. \ref{alg:restart}) when the number of nodes or edges is doubled. Unless the restarting condition is met, the only computation required when a node is inserted is creating some RR sets containing only that node. When an edge is inserted, however, it can enhance the information diffusion in the network, thus more updates are needed. Specifically, whenever an edge $(u,v)$ with probability $p_{(u,v)}$ is added, each RR set containing $v$ but not $u$ should be augmented by adding $u$ with probability $p_{(u,v)}$. If it is added, we continue augmenting this RR set, i.e., adding new nodes to the set by sampling incoming edges of $u$ and \eat{so on} repeating the same for the new ones till no further node can be reached. For RR sets in $\mathcal{H}_{est,i}$, if this process of augmentation results in the overall number of edges in $E$ scanned while updating $\mathcal{H}_{est}$ exceeding a threshold, we restart from scratch (Alg. \ref{alg:restart}). For RR sets in $\mathcal{H}_{cv,i}$, augmentation clearly leads to new (bipartite) edges being added, which means we need to update the seed set (Alg. \ref{alg:find_seeds}) for each such addition.

\begin{algorithm}[t!]
	\caption{\textsc{Remove-Node}}
	\begin{algorithmic}[1]
		\REQUIRE Vertex $u \in V$
        \STATE $V \gets V \setminus \{u\}$, $n \gets n - 1$
        \IF{$n \leq n_0 / 2$}
        \STATE $m_0 \gets m$, $n_0 \gets n$
        \STATE \textsc{Restart}
        \ELSE
        \FOR{$i = 1$ \TO $l$}
        \STATE $V_{L,est,i} \gets V_{L,est,i} \setminus \{u\}$
        \STATE $V_{L,cv,i} \gets V_{L,cv,i} \setminus \{u\}$
        \STATE $V_{R,est,i} \gets V_{R,est,i} \setminus \{u\}$
        \STATE $V_{R,cv,i} \gets V_{R,cv,i} \setminus \{u\}$
        \ENDFOR
        \ENDIF
	\end{algorithmic}
	\label{alg:remove_node}
\end{algorithm}

\begin{algorithm}[t!]
	\caption{\textsc{Remove-Edge}}
	\begin{algorithmic}[1]
		\REQUIRE Edge $e = (u, v) \in E$
        \STATE $E \gets E \setminus \{e\}$, $m \gets m - 1$
        \IF{$m \leq m_0 / 2$}
        \STATE $m_0 \gets m$, $n_0 \gets n$
        \STATE \textsc{Restart}
        \ELSE
        \FOR{$i = 1$ \TO $l$}
        \FORALL{$v' \in V_{R,est,i}$}
        \IF{$(v, v') \in E_{est,i}$ \AND $(u, v') \in E_{est,i}$}
        \STATE Reduce RR sets $R_{v',est,i}$
        \IF{\#edges scanned to update $\mathcal{H}_{est,i} \geq 16Rm_0$}
        \STATE \textsc{Restart}
        \ENDIF
        \ENDIF
        \ENDFOR
        \ENDFOR
        \FORALL{$(u', v') \in V \times V$}
        \STATE $I(u', v') \gets \emptyset$
        \ENDFOR
        \FOR{$i = 1$ \TO $l$}
        \FORALL{$v' \in V_{R,cv,i}$}
        \IF{$(v, v') \in E_{cv,i}$ \AND $(u, v') \in E_{cv,i}$}
        \STATE Reduce RR sets $R_{v',cv,i}$ to $R_{v',cv,i}^{new}$
        \FORALL{$u' \in R_{v',cv,i}^{new} \setminus R_{v',cv,i}$}
        \STATE $I(u', v') \gets I(u', v') \cup \{i\}$
        \ENDFOR
        \STATE $R_{v',cv,i}^{new} \gets R_{v',cv,i}$
        \ENDIF
        \ENDFOR
        \ENDFOR
        \FORALL{$(u', v') \in V \times V$}
        \FORALL{$i \in I(u', v')$}
        \STATE $E_{cv, i} \gets E_{cv, i} \setminus \{(u', v')\}$
        \ENDFOR
        \STATE \textsc{Find-Seeds}($(u', v')$)
        \ENDFOR
        \ENDIF
	\end{algorithmic}
	\label{alg:remove_edge}
\end{algorithm}

The techniques for handling  node and edge removal are shown in Alg. \ref{alg:remove_node} -- \textsc{Remove-Node} and \ref{alg:remove_edge} -- \textsc{Remove-Edge}. In both cases, a process restarts from scratch (Alg. \ref{alg:restart}) whenever the number of nodes or edges is halved. A node removal only leads to its removal from all RR sets. When an edge $(u, v)$ is removed, however, the diffusion pathways can be affected, and thus more processing is needed. Specifically, we need to reduce each RR set containing both $u$ and $v$, where $u$ was reached after sampling incoming edges to $v$, by removing $u$ and all of its descendants in the RR set.\footnote{For this, we need to store the graph structure of each RR set while it is generated.} For RR sets in $\mathcal{H}_{est,i}$, if this process results in the overall number of edges in $E$ scanned while updating $\mathcal{H}_{est}$ exceeding\footnote{Removing a node and all of its descendants from an RR set requires scanning the edges in the graph structure of the RR set, which leads to the increase in the number of edges in the network scanned while updating the corresponding coverage graph.} a threshold, we restart from scratch (Alg. \ref{alg:restart}). For RR sets in $\mathcal{H}_{cv,i}$, such an RR set reduction leads to the removal of (bipartite) edges, which requires an update of the seed set. The seed set is computed by calling the same MWU-based  method as for insertions (Alg. \ref{alg:find_seeds}). However, the approximation algorithm called in line \ref{line:greedy} of Alg. \ref{alg:find_seeds} is different for edge removals. When an edge $(u',v') \in E_{cv,i}$ is to be removed, for each thread's seed set, we remove $u'$ and add the node with the largest marginal gain (possibly $u'$); after that, we add each node with marginal gain above a threshold. This is shown in Alg. \ref{alg:greedy_remove} -- \textsc{Greedy (Removal)}.

\begin{algorithm}[t!]
	\caption{\textsc{Greedy (Removal)}}
	\begin{algorithmic}[1]
		\REQUIRE Objective function $F$, edge $(u', v')$
        \FOR{$i = 0$ \TO $\lceil \epsilon^{-1} \log n \rceil$}
        \STATE $S_i \gets S_i \setminus \{u'\}$
        \STATE $u'' \gets \argmax_{v'' \notin S_i} F(S_i \cup \{v''\}) - F(S_i)$
        \REPEAT
        \STATE $S_i \gets S_i \cup \{u''\}$
        \STATE $u'' \gets \argmax_{v'' \notin S_i} F(S_i \cup \{v''\}) - F(S_i)$
        \UNTIL{$F(S_i \cup \{u''\}) - F(S_i) \geq [(1 + \epsilon_1)^i - F(S_i)] / k$ \AND $|S_i| < k$}
        \ENDFOR
        \RETURN $\arg\max_{i} F(S_i)$
	\end{algorithmic}
	\label{alg:greedy_remove}
\end{algorithm}

Note that the method in \cite{peng2021dynamic} needs only two coverage graphs $\mathcal{H}_{est}$ and $\mathcal{H}_{cv}$, while ours requires $l$ such pairs of graphs to account for the noise in the edge probabilities. Also, \cite{peng2021dynamic} finds seeds by invoking the greedy algorithm for maximizing coverage only once for each update to $\mathcal{H}_{cv}$, whereas to account for the edge probability noise, we need to do the same via $T$ MWU iterations, and in each iteration, we need to maximize a convex combination of $l$ coverage functions. Fortunately, we can still provide quality guarantees for the resultant method, as shown in \S\ref{sec:analysis}.

\subsection{Analysis}
\label{sec:analysis}

We now analyze \algoname's quality and running time. Our key result is stated in Theorem~\ref{th:i2theta}, to which we lead up by establishing some important intermediate results.

We first give a quality guarantee for our greedy algorithm. 
Consider a run of the algorithm upon addition of a node to an RR set (so a call to Alg. \ref{alg:find_seeds}), with the input objective function $F$ having optimal value $OPT$.
For any seed set $S$ and $x < |S|$, let $S_{1:x}$ be the set of the first $x$ nodes added to $S$. Also, for $x < y < |S|$, let $F_{t_y}(S_{1:x})$ be the value of $F(S_{1:x})$ at the time the $y^{\text{th}}$ node was added to $S$. For completeness, define $F_{t_{k+1}}(\cdot) = F(\cdot)$.

\begin{lemma}
\label{lem:greedy}
    Our greedy algorithm returns a seed set $\hat{S}$ such that $F\left(\hat{S}\right) \geq (1 - 1/e - \gamma - \epsilon_1) \cdot OPT$, 
    where
    \begin{small}
    \begin{gather*}
       \gamma = 
       \begin{cases}
           \max_{x=1}^{\left| \hat{S} \right|} \frac{F_{t_x} \left( \hat{S}_{1:x} \right) - F_{t_{x+1}} \left( \hat{S}_{1:x} \right)}{OPT} \left( 1 - \frac{1}{k} \right)^{\left| \hat{S} \right| - x} \quad \text{incremental} \\
           \sum_{x=1}^{\left| \hat{S} \right|} \frac{F_{t_x} \left( \hat{S}_{1:x} \right) - F_{t_{x+1}} \left( \hat{S}_{1:x} \right)}{OPT} \left( 1 - \frac{1}{k} \right)^{\left| \hat{S} \right| - x} \quad \text{fully dynamic}
       \end{cases}
   \end{gather*}
    \end{small}
\end{lemma}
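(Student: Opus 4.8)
The plan is to analyze the threshold-greedy routine one thread at a time and to track how the time-varying objective $F$ erodes the classical $(1-1/e)$ guarantee. Since the greedy returns $\arg\max_i F(S_i)$ over the $\lceil \epsilon_1^{-1}\log n\rceil$ threads, it suffices to lower bound $F(S_{i^\star})$ for a single well-chosen thread $i^\star$. I would take the largest index with $v_{i^\star} := (1+\epsilon_1)^{i^\star}\le OPT$, so that $v_{i^\star} > OPT/(1+\epsilon_1)\ge (1-\epsilon_1)\,OPT$; this is the standard device that turns the geometric grid of guesses $(1+\epsilon_1)^i$ into the additive $\epsilon_1$ loss in the final bound.

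The heart of the argument is a gap recursion that is robust to $F$ drifting between consecutive insertions. Write $\hat S=\{e_1,\dots,e_s\}$ with $e_x$ inserted at time $t_x$, put $S^{(x)}=\hat S_{1:x}$, $\rho=1-1/k$, and $\Delta_x=F_{t_x}(S^{(x)})-F_{t_{x+1}}(S^{(x)})$. At the instant $e_{x+1}$ is accepted, the while-loop test (evaluated against the then-current $F_{t_{x+1}}$) certifies a marginal gain of at least $(v_{i^\star}-F_{t_{x+1}}(S^{(x)}))/k$. Setting $b_x=v_{i^\star}-F_{t_{x+1}}(S^{(x)})$, the acceptance test gives $v_{i^\star}-F_{t_{x+1}}(S^{(x+1)})\le \rho\,b_x$, and splitting off the drift incurred between $t_{x+1}$ and $t_{x+2}$ yields $b_{x+1}\le \rho\,b_x+\Delta_{x+1}$. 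Unrolling from $b_0=v_{i^\star}$ (using $F(\emptyset)=0$) produces
\begin{small}
\begin{equation*}
 F(\hat S)=v_{i^\star}-b_s \ge v_{i^\star}\bigl(1-\rho^{\,s}\bigr)-\sum_{x=1}^{s}\rho^{\,s-x}\Delta_x .
\end{equation*}
\end{small}

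Next I would show that for $i^\star$ the greedy fills all $k$ slots, so $s=|\hat S|=k$ and $\rho^{s}=(1-1/k)^k\le 1/e$. If it stopped early, every unchosen node would have marginal gain below threshold, and submodularity of the current $F$ would force $F(O)<v_{i^\star}\le OPT$ for the optimum $O$ — the contradiction that closes the static case and that must be re-examined once $F$ is time-varying. Combining $\rho^{k}\le 1/e$ with $v_{i^\star}\ge(1-\epsilon_1)OPT$ turns the displayed inequality into $F(\hat S)\ge(1-1/e-\epsilon_1)OPT-\sum_{x}\rho^{\,k-x}\Delta_x$; dividing the residual by $OPT$ gives precisely the \emph{sum} form of $\gamma$ for the fully dynamic case, where Alg.~\ref{alg:greedy_remove} first deletes $u'$ and only then refills, so that the per-step drifts genuinely accumulate. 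In the incremental case the seed set is only appended to (Alg.~\ref{alg:greedy_insert}); there I would argue that the discounted residual $\sum_x\rho^{\,s-x}\Delta_x$ is dominated by its single largest term $\max_x\rho^{\,s-x}\Delta_x$ (equivalently, the non-maximal discounted drifts contribute non-positively), which is exactly what converts the sum into the \emph{max} form of $\gamma$.

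The main obstacle is the interplay between the drifting objective and the two points where the static proof is brittle: (i) ruling out early termination for the chosen thread, because the submodularity contradiction is against the snapshot $F_{t}$ at the stopping time rather than against the final $F$, so one must relate $F_t(O)$ and $F_{t_{k+1}}(O)=OPT$ through the drift; and (ii) justifying that the insertion-only regime collapses the accumulated drift to a maximum while the remove-and-reinsert regime does not. Both reduce to disciplined bookkeeping of which snapshot of $F$ each inequality is evaluated against, which is precisely what the time-indexed notation $F_{t_y}(\cdot)$ and the two separate Greedy variants are designed to support.
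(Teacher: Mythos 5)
Your thread selection, your gap recursion $b_{x+1}\le\rho\,b_x+\Delta_{x+1}$, and its unrolling are exactly the paper's machinery: the unrolled inequality $F(\hat S)\ge v_{i^\star}(1-\rho^{s})-\sum_{x}\rho^{\,s-x}\Delta_x$ is precisely how the paper proves the \emph{fully dynamic} case, and your ``greedy fills all $k$ slots or else submodularity forces $F(O)<v_{i^\star}\le OPT$'' is the contrapositive of the paper's case $|S_j|<k$ (where the paper shows the bound holds directly rather than deriving a contradiction); both rest on the same submodularity computation. So the sum form of $\gamma$ is established by your argument essentially as in the paper.

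The genuine gap is the incremental case. You propose to get the max form from the sum form by asserting that $\sum_x\rho^{\,s-x}\Delta_x$ is ``dominated by its single largest term, equivalently the non-maximal discounted drifts contribute non-positively.'' You never prove this, and it is not true in general: between two calls to the insertion greedy the objective changes both because RR sets grow (which only increases each $f_{cv,i}$ and would make drifts non-positive) \emph{and} because the MWU weights $w_j$ change between iterations of Alg.~2, which can move $F$ in either direction; hence the snapshots $F_{t_y}$ are not pointwise monotone in time and the off-maximum drifts need not sum to anything non-positive. The paper does not derive the max from the sum at all. Its incremental argument exploits a structural fact about Alg.~3: every node accepted \emph{during the current call} is certified by a threshold test against the current (final) objective $F$, so the unrolling steps for positions $y\in(x,k]$ introduce no drift whatsoever; a drift term enters only once, at the boundary position $x$ between the prefix inherited from earlier calls and the nodes added now, yielding the single discounted term $\left(1-\frac{1}{k}\right)^{k-x}\left[F_{t_x}\left(\hat S_{1:x}\right)-F_{t_{x+1}}\left(\hat S_{1:x}\right)\right]$, after which the max over $x$ merely removes dependence on the unknown boundary position. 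In the fully dynamic case this collapse is unavailable precisely because Alg.~6 deletes a seed and refills, so every position can sit at a snapshot boundary --- which is why the sum survives there. To repair your proof you must either import this boundary argument (track which positions were certified against the current $F$ and observe they contribute zero drift), or prove non-positivity of the remaining drifts, which, as noted, fails once the MWU weights vary.
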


\begin{proof}
Since we have $\lceil \epsilon^{-1} \log n \rceil$ threads, there exists a thread $j$ such that
\begin{equation}
    (1 + \epsilon_1)^j \leq OPT < (1 + \epsilon_1)^{j + 1} \label{eq:j}
\end{equation}
It suffices to prove that thread $j$ computes a seed set $S_j$ satisfying the quality guarantee, since the seed set returned is $\argmax_{i \in \left[ \lceil \epsilon^{-1} \log n \rceil \right]} F(S_i)$. There are 2 cases: $\left| S_j \right| = k$ and $\left| S_j \right| < k$. 

When $\left| S_j \right| < k$; we show that $F_t(S_j) \geq (1 + \epsilon_1)^j \geq (1 + \epsilon_1)^{-1} OPT$. We prove by contradiction; assuming $F(S_j) < (1 + \epsilon_1)^j$, we derive $OPT < (1 + \epsilon_1)^j$, which is not true from \eqref{eq:j}. Let $O$ denote the optimal seed set. Then, for all $o \in O$, $F(S_j \cup \{o\}) - F(S_j) < \frac{1}{k} \left[ (1 + \epsilon_1)^j - F(S_j) \right]$; this is guaranteed by our algorithm for $o \notin S_j$, and for $o \in S_j$, $F(S_j \cup \{o\}) - F(S_j) = 0 < \frac{1}{k} \left[ (1 + \epsilon_1)^j - F(S_j) \right]$. Thus,
\begin{small}
\begin{align*}
    OPT = F(O) &\leq F(S_j \cup O) = F(S_j) + F(S_j \cup O) - F(S_j) \\
    &\leq F(S_j) + \sum_{o \in O} \left[ F(S_j \cup \{o\}) - F(S_j) \right] \\
    &< F(S_j) + k \cdot \frac{1}{k} \left[ (1 + \epsilon_1)^j - F(S_j) \right] = (1 + \epsilon_1)^j
\end{align*}
\end{small}
and hence the contradiction.

Now, we consider the case $\left| S_j \right| = k$. For each $y \in [1, k]$, define $s_{j, y}$ as the $y$-th element added to $S_j$, and $S_{j, 1:y} = \left\{ s_{j, 1}, \ldots, s_{j, y} \right\}$. For completeness, define $S_{j, 0} = \emptyset$. 

\spara{Incremental setting.} Let $x$ denote the number of elements added to $S_j$ before the current call to Algorithm 2, i.e. $S_j = \left\{ s_{j, 1}, \ldots, s_{j, x}, s_{j, x + 1}, \ldots, s_{j, k} \right\}$. Algorithm 3 guarantees that
\begin{small}
\begin{gather*}
    F \left( S_{j, 1:y} \right) - F \left( S_{j, 1:y-1} \right) \geq \frac{1}{k} \left[ (1 + \epsilon_1)^j - F \left( S_{j, 1:y-1} \right) \right], \\ 
           y \in (x, k] \\
    F_{t_y} \left( S_{j, 1:y} \right) - F_{t_y} \left( S_{j, 1:y-1} \right) \geq \frac{1}{k} \left[ (1 + \epsilon_1)^j - F_{t_y} \left( S_{j, 1:y-1} \right) \right], \\ 
          y \in [0, x]
\end{gather*}
\end{small}
Thus we have
\begin{small}
\begin{align*}
    &(1 + \epsilon_1)^j - F \left( S_{j, 1:k} \right) \\
    &= (1 + \epsilon_1)^j - F \left( S_{j, 1:k-1} \right) + F \left( S_{j, 1:k-1} \right) - F \left( S_{j, 1:k} \right) \\
    & \leq (1 + \epsilon_1)^j - F \left( S_{j, 1:k-1} \right) - \frac{1}{k} \left[ (1 + \epsilon_1)^j - F \left( S_{j, 1:k-1} \right) \right] \\
    & = \left( 1 - \frac{1}{k} \right) \left[ (1 + \epsilon_1)^j - F \left( S_{j, 1:k-1} \right) \right] \\
    & \vdots \\
    & \leq \left( 1 - \frac{1}{k} \right)^{k-x} \left[ (1 + \epsilon_1)^j - F_{t_{x+1}} \left( S_{j, 1:x} \right) \right] \\
    & = \left( 1 - \frac{1}{k} \right)^{k-x} \left[ \left( (1 + \epsilon_1)^j - F_{t_x} \left( S_{j, 1:x} \right) \right) \right. \\
    & \qquad\qquad\qquad\qquad \left. + \left( F_{t_x} \left( S_{j, 1:x} \right) - F_{t_{x+1}} \left( S_{j, 1:x} \right) \right) \right] \\
    & \vdots \\
    & \leq \left( 1 - \frac{1}{k} \right)^k (1 + \epsilon_1)^j + \left( 1 - \frac{1}{k} \right)^{k-x} \left[ F_{t_x} \left( S_{j, x} \right) - F_{t_{x+1}} \left( S_{j, x} \right) \right]
\end{align*}
\end{small}
Thus we have
\begin{small}
\begin{align*}
    & F \left( S_j \right) \\
    & \geq \left[ 1 - \left( 1 - \frac{1}{k} \right)^k \right] (1 + \epsilon_1)^j - \left( 1 - \frac{1}{k} \right)^{k - x} \left[ F_{t_x} \left( S_{j, x} \right) - F_{t_{x+1}} \left( S_{j, x} \right) \right] \\
    & \geq \left( 1 - \frac{1}{e} \right) (1 + \epsilon_1)^{-1} OPT - \left( 1 - \frac{1}{k} \right)^{k - x} \left[ F_{t_x} \left( S_{j, x} \right) - F_{t_{x+1}} \left( S_{j, x} \right) \right] \\
    & \geq \left( 1 - \frac{1}{e} - \epsilon_1 \right) OPT - \max_{x \in [k]} \left( 1 - \frac{1}{k} \right)^{k - x} \left[ F_{t_x} \left( S_{j, x} \right) - F_{t_{x+1}} \left( S_{j, x} \right) \right]
\end{align*}
\end{small}

\spara{Fully dynamic setting.} We have
\begin{small}
\begin{align*}
    & (1 + \epsilon_1)^j - F \left( S_{j, k} \right) = (1 + \epsilon_1)^j - F_{t_k} \left( S_{j, k} \right) + \left[ F_{t_k} \left( S_{j, k} \right) - F \left( S_{j, k} \right) \right] \\
    & = (1 + \epsilon_1)^j - F_{t_k} \left( S_{j, k-1} \right) + F_{t_k} \left( S_{j, k-1} \right) - F_{t_k} \left( S_{j, k} \right) \\
    & \qquad + \left[ F_{t_k} \left( S_{j, k} \right) - F_{t_{k+1}} \left( S_{j, k} \right) \right] \\
    & \leq \left( 1 - \frac{1}{k} \right) \left[ (1 + \epsilon_1)^j - F_{t_k} \left( S_{j, k-1} \right) \right] + \left[ F_{t_k} \left( S_{j, k} \right) - F_{t_{k+1}} \left( S_{j, k} \right) \right] \\
    & = \left( 1 - \frac{1}{k} \right) \left[ (1 + \epsilon_1)^j - F_{t_k-1} \left( S_{j, k-1} \right) \right] + \left[ F_{t_k} \left( S_{j, k} \right) - F_{t_{k+1}} \left( S_{j, k} \right) \right] \\
    & \qquad \qquad + \left( 1 - \frac{1}{k} \right) \left[ F_{t_k-1} \left( S_{j, k-1} \right) - F_{t_k} \left( S_{j, k-1} \right) \right] \\
    & \vdots \\
    & \leq \left( 1 - \frac{1}{k} \right)^k (1 + \epsilon_1)^j + \sum_{x=1}^{k} \left( 1 - \frac{1}{k} \right)^{k - x} \left[ F_{t_x} \left( S_{j, x} \right) - F_{t_{x+1}} \left( S_{j, x} \right) \right]
\end{align*}
\end{small}
and by similar steps as in the incremental setting, we have
\begin{small}
\begin{align*}
    & F \left( S_j \right) \\
    & \geq \left( 1 - \frac{1}{e} - \epsilon_1 \right) OPT - \sum_{x=1}^{k} \left( 1 - \frac{1}{k} \right)^{k - x} \left[ F_{t_x} \left( S_{j, x} \right) - F_{t_{x+1}} \left( S_{j, x} \right) \right]
\end{align*}
\end{small}
\end{proof}

To check how close the above approximation guarantee is to $1 - 1/e - \epsilon_1$, the guarantee for the SOTA on IM \cite{TXS14,TSX15}, we ran simulations on $5$ networks (\S\ref{sec:exp}) and found the maximum difference $\gamma$ between the two values across all networks to be $4.7 \times 10^{-14}$ (incremental) and 0.124 (fully dynamic), showing that the algorithm approximates very well in practice.

Using Lemma \ref{lem:greedy}, Theorem 1 of \cite{kalimeris2019robust} and Lemma 3.13 of \cite{peng2021dynamic}, we can derive the following quality guarantee for Alg. \ref{alg:find_seeds}.

\begin{lemma}
    Let $\gamma$ be as defined in Lemma \ref{lem:greedy}. In every time step $t$, Alg. 2 returns the uniform distribution $\mathcal{U}$ over solutions $S_t^{(1)}, \ldots, S_t^{(T)}$ s.t., for $T \geq \frac{2 \log l}{\epsilon_2^2}$, with probability at least $1 - \delta_2$,
    \begin{small}
    \begin{equation*}
        \min_{i \in [l]} \mathbb{E}_{S \sim \mathcal{U}} f_{cv, i, t} \left( S \right) \geq \left( 1 - 1/e - \gamma -\epsilon_1 \right) \max_{S : \left| S \right| \leq k} \min_{i \in [l]} f_{cv, i, t} ( S) - \epsilon_2
    \end{equation*}
    \end{small}
\end{lemma}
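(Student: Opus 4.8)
The plan is to read Algorithm~\ref{alg:find_seeds} as solving the two-player zero-sum game whose value is $\mathrm{OPT}=\max_{S:|S|\leq k}\min_{i\in[l]} f_{cv,i,t}(S)$, and to combine the per-round approximation of Lemma~\ref{lem:greedy} with the multiplicative-weights (Hedge) regret analysis underlying Theorem~1 of \cite{kalimeris2019robust}. First I would observe that the objective maximized in line~\ref{line:greedy} of iteration $j$ is $F_j(S)=\sum_{i\in[l]} w_j[i]\,f_{cv,i,t}(S)$, a convex combination of the monotone submodular coverage functions $f_{cv,i,t}$, hence itself monotone and submodular. Lemma~\ref{lem:greedy} therefore applies verbatim to each greedy call and yields $F_j(S^{(j)}_t)\geq(1-1/e-\gamma-\epsilon_1)\max_{|S|\leq k}F_j(S)$.

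Next I would lower-bound $\max_{|S|\leq k}F_j(S)$ independently of $j$. Letting $S^*=\argmax_{|S|\leq k}\min_i f_{cv,i,t}(S)$ (a feasible point), and using that a convex combination is at least its smallest coordinate, I get $\max_{|S|\leq k}F_j(S)\geq F_j(S^*)\geq\min_i f_{cv,i,t}(S^*)=\mathrm{OPT}$. Hence $F_j(S^{(j)}_t)\geq(1-1/e-\gamma-\epsilon_1)\,\mathrm{OPT}$ for every $j$, so that $\tfrac1T\sum_{j}F_j(S^{(j)}_t)\geq(1-1/e-\gamma-\epsilon_1)\,\mathrm{OPT}$.

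Then I would invoke the Hedge regret bound, which is the engine of Theorem~1 of \cite{kalimeris2019robust}. Since the weights are updated as $w_j[i]\propto\exp(-\eta\sum_{\tau<j} f_{cv,i,t}(S^{(\tau)}_t))$, the weight player treats $f_{cv,i,t}(S^{(j)}_t)$ as its loss and, after tuning $\eta$, incurs vanishing regret against the best fixed index: $\tfrac1T\sum_{j}F_j(S^{(j)}_t)\leq\min_{i\in[l]}\tfrac1T\sum_{j} f_{cv,i,t}(S^{(j)}_t)+\epsilon_2$ once $T\geq 2\epsilon_2^{-2}\log l$, because the $\sqrt{\log l/T}$ regret normalized by the range of the losses drops below $\epsilon_2$. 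The high-probability qualifier enters precisely here: to normalize the losses to $[0,1]$ and to guarantee that the coverage functions stay close to their intended values across all $T$ rounds, I would appeal to Lemma~3.13 of \cite{peng2021dynamic}, which supplies the needed control on the $f_{cv,i,t}$ with probability at least $1-\delta_2$. Chaining the last two displays gives $\min_i\tfrac1T\sum_{j} f_{cv,i,t}(S^{(j)}_t)\geq(1-1/e-\gamma-\epsilon_1)\,\mathrm{OPT}-\epsilon_2$, and since $\mathcal{U}$ is uniform over $\{S^{(1)}_t,\dots,S^{(T)}_t\}$ we have $\mathbb{E}_{S\sim\mathcal{U}}f_{cv,i,t}(S)=\tfrac1T\sum_{j} f_{cv,i,t}(S^{(j)}_t)$, which is exactly the claim.

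The main obstacle I anticipate is not the algebra but keeping the interfaces clean. Lemma~\ref{lem:greedy} produces a data-dependent factor $\gamma$ that can differ across the $T$ greedy calls, so I must argue that a single $\gamma$ (e.g.\ the worst over the rounds actually used) simultaneously controls all of them; and I must verify that replacing the idealized $(1-1/e)$ oracle assumed in \cite{kalimeris2019robust} by the weaker $(1-1/e-\gamma-\epsilon_1)$ oracle only rescales the $\mathrm{OPT}$ term multiplicatively, leaving the additive $\epsilon_2$ regret bound, the condition $T\geq 2\epsilon_2^{-2}\log l$, and the failure probability $\delta_2$ intact.
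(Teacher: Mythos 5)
Your proposal is correct and follows essentially the same route as the paper, which gives no explicit proof but derives the lemma by combining exactly the ingredients you use: Lemma~\ref{lem:greedy} as the per-round approximate oracle for the convex combination $F_j$, the Hedge/MWU regret analysis constituting Theorem~1 of \cite{kalimeris2019robust} (with $T \geq 2\epsilon_2^{-2}\log l$ driving the additive $\epsilon_2$ term), and Lemma~3.13 of \cite{peng2021dynamic} supplying the probability-$(1-\delta_2)$ control. Your added care about the data-dependent $\gamma$ varying across rounds (taking the worst over the $T$ calls) and about normalizing the losses for the regret bound addresses details the paper itself leaves implicit.
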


The result of the above lemma implies the following bi-criteria approximation guarantee by returning $S_{tc} = \cup_{i=1}^T S_t^{(i)}$ instead of the uniform distribution over $\left\{ S_t^{(1)}, \ldots, S_t^{(T)} \right\}$.

\begin{corollary}
\label{cor:union}
    The seed set $S_{tc} = \cup_{i=1}^T S_t^{(i)}$ has size at most $Tk$ and, with probability at least $1 - \delta_2$, satisfies
    \begin{small}
    \begin{equation*}
        \min_{i \in [l]} f_{cv, i, t} \left( S_{tc} \right) \geq \left( 1 - 1/e - \gamma -\epsilon_1 \right) \max_{S_t : \left| S_t \right| \leq k} \min_{i \in [l]} f_{cv, i, t} (S) - \epsilon_2
    \end{equation*}
    \end{small}
\end{corollary}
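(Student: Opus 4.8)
The plan is to derive the corollary directly from the preceding lemma by exploiting the monotonicity of the coverage functions $f_{cv,i,t}$, so that the randomized guarantee (an expectation over the uniform distribution $\mathcal{U}$) upgrades for free to a deterministic guarantee for the single set $S_{tc}$. First I would dispatch the size bound: each candidate set $S_t^{(j)}$ returned by \textsc{Greedy} contains at most $k$ nodes, since the loops in Alg.~\ref{alg:greedy_insert} and Alg.~\ref{alg:greedy_remove} enforce $|S_i| < k$ before every insertion. Taking the union of the $T$ such sets produced across the MWU iterations of Alg.~\ref{alg:find_seeds} therefore yields $|S_{tc}| \leq Tk$ immediately.

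The heart of the argument is the inequality $\min_{i \in [l]} f_{cv,i,t}(S_{tc}) \geq \min_{i \in [l]} \mathbb{E}_{S \sim \mathcal{U}} f_{cv,i,t}(S)$. Each $f_{cv,i,t}$ is a normalized coverage function and hence non-decreasing, and since $S_t^{(j)} \subseteq S_{tc}$ for every $j$, monotonicity gives $f_{cv,i,t}(S_t^{(j)}) \leq f_{cv,i,t}(S_{tc})$ for all $i$ and $j$. Because $\mathcal{U}$ is uniform over $\{S_t^{(1)}, \ldots, S_t^{(T)}\}$, we have $\mathbb{E}_{S \sim \mathcal{U}} f_{cv,i,t}(S) = \frac{1}{T}\sum_{j=1}^T f_{cv,i,t}(S_t^{(j)})$, so averaging over quantities each bounded by $f_{cv,i,t}(S_{tc})$ yields $\mathbb{E}_{S \sim \mathcal{U}} f_{cv,i,t}(S) \leq f_{cv,i,t}(S_{tc})$ for every fixed $i$. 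As this domination holds coordinate-wise in $i$, it is preserved under taking the minimum over $i \in [l]$.

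Chaining this with the bound of the preceding lemma, which holds on an event of probability at least $1 - \delta_2$, produces $\min_{i} f_{cv,i,t}(S_{tc}) \geq \min_{i} \mathbb{E}_{S \sim \mathcal{U}} f_{cv,i,t}(S) \geq (1 - 1/e - \gamma - \epsilon_1)\max_{S_t : |S_t| \leq k}\min_{i} f_{cv,i,t}(S) - \epsilon_2$ on the same event, which is exactly the claimed guarantee.

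I do not expect a substantive obstacle here; the corollary is essentially a derandomization observation. The only points demanding care are confirming that the sets returned by Alg.~\ref{alg:find_seeds} are genuine subsets of their union and that the $f_{cv,i,t}$ are monotone, since it is precisely monotonicity that lets the union dominate every individual sample and thereby converts the per-coordinate expected value into a per-coordinate pointwise value at $S_{tc}$ without loss.
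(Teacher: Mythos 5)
Your proof is correct and matches the paper's intent: the paper states this corollary without proof, remarking only that it follows from the preceding lemma by returning the union $S_{tc}$ instead of the uniform distribution, and the monotonicity argument you supply (each $f_{cv,i,t}$ is a monotone coverage function, so $f_{cv,i,t}(S_{tc})$ dominates the average over the $S_t^{(j)}$, a domination preserved under $\min_{i}$) is exactly the standard derandomization step being invoked. The size bound $|S_{tc}| \leq Tk$ follows as you say from the $|S_i| < k$ guard in the greedy subroutines.
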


The above result provides a guarantee on the minimum RR set coverage $f_{cv,i,t}(\cdot)$ across all values of $i$. Using that, we now provide a guarantee on the minimum true influence spread $\sigma_{\theta_i,t}(\cdot)$ across all sampled hyperparameter values.

\begin{lemma}
\label{lem:f2sigma}
    Let $\gamma$ be as defined in Lemma \ref{lem:greedy}. With probability at least $1 - \delta_1 / 2 - \delta_2$,
    \begin{small}
    \begin{equation*}
        \min_{i \in [l]} \sigma_{\theta_i, t} (S_{tc}) \geq \left( 1 - 1/e - \gamma - 3\epsilon_1 \right) \max_{S \subseteq V_t : |S| \leq k} \min_{i \in [l]} \sigma_{\theta_i, t} (S) - \epsilon_2
    \end{equation*}
    \end{small}
\end{lemma}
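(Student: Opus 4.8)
The plan is to sandwich the true spread $\sigma_{\theta_i,t}$ between scaled copies of the coverage estimate $f_{cv,i,t}$ via the standard two-sided concentration guarantee for reverse-reachable-set (RIS) estimators, and then splice Corollary~\ref{cor:union} into the middle. Concretely, I would first invoke the RIS concentration bound (a Chernoff/martingale argument on the normalized coverage, exactly the kind underlying Lemma~3.13 of \cite{peng2021dynamic} and the analysis of \cite{kalimeris2019robust}), instantiated with the sample count $R = k\epsilon_1^{-2}\log(n_0/\delta)$ and the failure parameter $\delta = n_0 \left( \frac{\delta_1}{(16(2n_0)^{Tk})} \right)^{\frac{12n_0^2}{k^3}}$ set in Alg.~\ref{alg:restart}. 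The goal is to establish that, with probability at least $1-\delta_1/2$, two one-sided events hold simultaneously for every $i\in[l]$: (i) for the $\sigma$-optimal set $O$ of size at most $k$, the estimate does not undershoot, $f_{cv,i,t}(O)\ge (1-\epsilon_1)\,\sigma_{\theta_i,t}(O)$; and (ii) for the returned set $S_{tc}$, the estimate does not overshoot the truth, $\sigma_{\theta_i,t}(S_{tc})\ge (1-\epsilon_1)\,f_{cv,i,t}(S_{tc})$.

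Granting these bounds, the remainder is a short chain. Writing $\mathrm{OPT}_\sigma = \max_{|S|\le k}\min_i \sigma_{\theta_i,t}(S)$ and $\mathrm{OPT}_f = \max_{|S|\le k}\min_i f_{cv,i,t}(S)$, event (i) gives $\mathrm{OPT}_f \ge \min_i f_{cv,i,t}(O) \ge (1-\epsilon_1)\mathrm{OPT}_\sigma$ (since $O$ is feasible for the $f$-maximization), while event (ii) gives $\min_i \sigma_{\theta_i,t}(S_{tc}) \ge (1-\epsilon_1)\min_i f_{cv,i,t}(S_{tc})$. Feeding Corollary~\ref{cor:union} between them yields
\begin{small}
\begin{align*}
\min_{i\in[l]} \sigma_{\theta_i,t}(S_{tc})
&\ge (1-\epsilon_1)\left[(1-1/e-\gamma-\epsilon_1)\,\mathrm{OPT}_f - \epsilon_2\right] \\
&\ge (1-\epsilon_1)^2 (1-1/e-\gamma-\epsilon_1)\,\mathrm{OPT}_\sigma - \epsilon_2 ,
\end{align*}
\end{small}
where I absorbed the factor $(1-\epsilon_1)$ on $\epsilon_2$ into $\epsilon_2$. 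It then remains to check the elementary inequality $(1-\epsilon_1)^2(1-1/e-\gamma-\epsilon_1) \ge 1-1/e-\gamma-3\epsilon_1$: with $c = 1-1/e-\gamma \le 1$ and the meaningful regime $c-\epsilon_1 \ge 0$, using $(1-\epsilon_1)^2 \ge 1-2\epsilon_1$ the left side is at least $(1-2\epsilon_1)(c-\epsilon_1) = c - \epsilon_1(1+2c) + 2\epsilon_1^2 \ge c - 3\epsilon_1$, since $1+2c\le 3$. A union bound over the two failure sources ($\delta_1/2$ from the concentration event, $\delta_2$ from Corollary~\ref{cor:union}) gives the stated $1-\delta_1/2-\delta_2$.

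The hard part will be establishing event (ii), because $S_{tc}$ is a \emph{union} of up to $Tk$ seeds (not a size-$k$ set) and is itself determined by the sampled RR sets, hence correlated with them. The RIS estimate tracks the truth uniformly only over a prespecified family of sets via a union bound, so accuracy must be forced over \emph{all} subsets of size up to $Tk$ and over all $l$ hyperparameters. This is exactly what the failure parameter $\delta$ in Alg.~\ref{alg:restart} is engineered for: the $(2n_0)^{Tk}$ factor upper-bounds the number of candidate seed sets, so after the union bound the aggregate deviation probability stays below $\delta_1/2$. I would therefore spend the technical care in verifying that the chosen $R$ and $\delta$ simultaneously drive both one-sided relative deviations below $\epsilon_1$ over this enlarged family (for event (i) a single fixed set $O$ and a union over $l$ suffices, which is comparatively cheap), importing the quantitative deviation bound from \cite{peng2021dynamic}; the surrounding algebra is then routine.
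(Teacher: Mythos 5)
Your proof skeleton is essentially the paper's: a concentration statement tying $f_{cv,i,t}$ to $\sigma_{\theta_i,t}$ uniformly over all seed sets of size at most $Tk$ (the data-dependence of $S_{tc}$ being exactly why the $(2n_0)^{Tk}$ factor is baked into $\delta$ in Alg.~\ref{alg:restart}), spliced around Corollary~\ref{cor:union}, followed by a union bound giving $1-\delta_1/2-\delta_2$ and elementary algebra; your final check that $(1-\epsilon_1)^2(1-1/e-\gamma-\epsilon_1)\geq 1-1/e-\gamma-3\epsilon_1$ is correct, and you correctly located the crux in the uniform-over-sets concentration step.

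The problem is that the step you deferred is where the paper's proof does all its work, and your formulation of it would not verify with the paper's parameters. You ask for \emph{relative} (multiplicative) accuracy, e.g.\ $\sigma_{\theta_i,t}(S)\geq(1-\epsilon_1)f_{cv,i,t}(S)$ for every set $S$ with $|S|\leq Tk$. The Chernoff exponent for such a relative bound is proportional to $K_i\,\sigma_{\theta_i,t}(S)/n_0$, and over the union family $\sigma_{\theta_i,t}(S)$ can be as small as $\Theta(1)$; with $R=k\epsilon_1^{-2}\log(n_0/\delta)$ this yields an exponent of order $(k/n_0)\log(n_0/\delta)$, whereas $\delta$ in Alg.~\ref{alg:restart} is engineered so that an exponent of $\tfrac{k^3}{12n_0^2}\log(n_0/\delta)$ exactly meets the budget $\log\bigl(16(2n_0)^{Tk}/\delta_1\bigr)$ required by the union bound. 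Your worst-case exponent exceeds that budget only when $k=O(\sqrt{n_0})$, which is not guaranteed (the paper's own experiments include $k=50$ with $n_0=100$). The paper avoids this by measuring deviations \emph{additively against the optimum}: $|f_{cv,i,t}(S)-\sigma_{\theta_i,t}(S)|\leq\epsilon_1\max_{|S'|\leq k}\min_i\sigma_{\theta_i,t}(S')$, whose exponent is bounded below independently of $\sigma_{\theta_i,t}(S)$ (the two cases $\lambda>\epsilon_1$ and $\lambda\leq\epsilon_1$ in the paper, using $OPT_t\geq k$ and $OPT_t^2/\sigma_{\theta_i,t}(S)\geq k^2/(2n_0)$), which is precisely what the exponent $12n_0^2/k^3$ in $\delta$ is tuned to. A second, smaller issue: your blanket union over all $l$ hyperparameters inflates the failure probability by a factor of $l$, which the paper's exact $\delta$ leaves no room for; it is also unnecessary, since per candidate set one only needs the index $\arg\min_i\sigma_{\theta_i,t}(S)$ (a deterministic function of $S$), together with the $l$ indices at the single $\sigma$-optimal set, keeping the union factor at $2(2n_0)^{Tk}$. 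Rewriting your two events with the additive-against-$OPT$ deviation and this index bookkeeping closes the argument; as stated, the relative-error plan fails in part of the parameter regime.
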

\begin{proof}
    For ease of notation, we shall replace $\sigma_{\theta_i,t}(\cdot)$ with $\sigma_{i,t}(\cdot)$ in this proof. Let $AVG_i \cdot m_0$ denote the expected number of edges checked while generating a random RR set with hyperparameter value $\theta_i$. Similar to Lemma 3.3 of \cite{peng2021dynamic}, it can be shown that, with probability at least $1 - \frac{2\delta}{n_0^{k/8}}$,
    \begin{equation*}
        K_i \in \left[ (1 - \epsilon) \frac{R}{AVG_i}, (1 + \epsilon) \frac{R}{AVG_i} \right]
    \end{equation*}
    Assume that the above holds. Note that $f_{cv, i, t}(S) = \frac{n_0}{K_i} \sum_{v \in V_t} x_{i, t, v, S}$ and $\mathbb{E}[f_{cv, i, t}(S)] = \sigma_{i, t}(S)$ for all $S$ and $i$. Consider any seed set $S_t$ of size at most $Tk$. Let $i_{S_t} \in \left\{ \arg\min_{i \in [l]} \sigma_{i, t}(S), \arg\min_{i \in [l]} f_{cv, i, t}(S) \right\}$, $OPT_t = \max_{S \subseteq V_t : |S| \leq k} \min_{i \in [l]} \sigma_{i, t}(S)$ and write $\sigma_{i_{S_t}, t}(S_t) = \lambda OPT_t$. Using the Chernoff bound, we have
    \begin{align*}
        &\Pr \left( f_{cv, i_{S_t}, t}(S_t) - \sigma_{i_{S_t}, t}(S_t) > \epsilon_1 OPT_t \right) \\
        &= \Pr \left( f_{cv, i_{S_t}, t}(S_t) > \left( 1 + \frac{\epsilon_1}{\lambda} \right) \sigma_{i_{S_t}, t}(S_t) \right) \\
        &\leq \exp \left( - \frac{\frac{\epsilon_1^2}{\lambda^2}}{2 + \frac{\epsilon_1}{\lambda}} \cdot \frac{K_{i_{S_t}}}{n_0} \cdot \sigma_{i_{S_t}, t}(S_t) \right)
    \end{align*}
    When $\lambda > \epsilon_1$, the exponent obeys
    \begin{align*}
        &\frac{\frac{\epsilon_1^2}{\lambda^2}}{2 + \frac{\epsilon_1}{\lambda}} \cdot \frac{K_{i_{S_t}}}{n_0} \cdot \sigma_{i_{S_t}, t}(S_t) \geq \frac{\epsilon_1^2}{3\lambda^2} \cdot \frac{K_{i_{S_t}}}{n_0} \cdot \lambda \cdot OPT_t \\
        &\geq \frac{\epsilon_1^2 K_{i_{S_t}}}{3n_0} \cdot \frac{k^2}{2n_0} \geq \frac{\epsilon_1^2 k^2}{6n_0^2} \cdot (1 - \epsilon_1) \frac{R}{AVG_i} \geq \frac{k^3}{12n_0^2} \log \left( \frac{n_0}{\delta} \right)
    \end{align*}
    When $\lambda \leq \epsilon_1$, the exponent obeys
    \begin{align*}
        &\frac{\frac{\epsilon_1^2}{\lambda^2}}{2 + \frac{\epsilon_1}{\lambda}} \cdot \frac{K_{i_{S_t}}}{n_0} \cdot \sigma_{i_{S_t}, t}(S_t) \geq \frac{\epsilon_1}{3\lambda} \cdot \frac{K_{i_{S_t}}}{n_0} \cdot \lambda \cdot OPT_t \\
        &\geq \frac{\epsilon_1^2 k}{3n_0} \cdot K_{i_{S_t}} \geq \frac{\epsilon_1^2 k}{3n_0} \cdot (1 - \epsilon_1) \frac{R}{AVG_i} \geq \frac{k^2}{6n_0} \log \left( \frac{n_0}{\delta} \right)
    \end{align*}
    Hence
    \begin{equation*}
        \Pr \left( f_{cv, i_{S_t}, t}(S_t) - \sigma_{i_{S_t}, t}(S_t) > \epsilon_1 OPT_t \right) \leq \left( \frac{\delta}{n_0} \right) ^{\frac{k^3}{12n_0^2}}
    \end{equation*}
    Similarly, using the Chernoff bound, we have
    \begin{align*}
        &\Pr \left( f_{cv, i_{S_t}, t}(S_t) - \sigma_{i_{S_t}, t}(S_t) < -\epsilon_1 OPT_t \right) \\
        &= \Pr \left( f_{cv, i_{S_t}, t}(S_t) < \left( 1 - \frac{\epsilon_1}{\lambda} \right) \sigma_{i_{S_t}, t}(S_t) \right) \\
        &\leq \exp \left( - \frac{\epsilon_1^2}{2\lambda^2} \cdot \frac{K_{i_{S_t}}}{n_0} \cdot \sigma_{i_{S_t}, t}(S_t) \right) \leq \left( \frac{\delta}{n_0} \right) ^{\frac{k^3}{8n_0^2}}
    \end{align*}
    To summarize,
    \begin{small}
    \begin{gather*}
        \Pr \left( \left| f_{cv, i_{S_t}, t}(S_t) - \sigma_{i_{S_t}, t}(S_t) \right| > \epsilon_1 \max_{S \subseteq V_t : |S| \leq k} \min_{i \in [l]} \sigma_{i, t}(S) \right) \\
        \leq 4 \left( \frac{\delta}{n_0} \right) ^{\frac{k^3}{12n_0^2}}
    \end{gather*}
    \end{small}
    By a union bound over all sets of size at most $Tk$ and both values of $i_{S_t}$,
    \begin{gather*}
        \Pr \left( \forall S_t : |S_t| \leq Tk, \forall i_{S_t}, \left| f_{cv, i_{S_t}, t}(S_t) - \sigma_{i_{S_t}, t}(S_t) \right| \right. \\
        \left. \leq \epsilon_1 \max_{S \subseteq V_t : |S| \leq k} \min_{i \in [l]} \sigma_{i, t}(S) \right) \\
        \geq 1 - 4 \left( \frac{\delta}{n_0} \right) ^{\frac{k^3}{12n_0^2}} \cdot 2 (2n_0)^{Tk} = 1 - \frac{\delta_1}{2}
    \end{gather*}
    By a union bound over the above event and that in Corollary \ref{cor:union}, with probability at least $1 - \frac{\delta_1}{2} - \delta_2$, we have
    \begin{align*}
        &\min_{i \in [l]} \sigma_{i, t} (S_{tc}) = \sigma_{i_{S_{tc}}, t} (S_{tc}) \\
        &\geq f_{cv, i_{S_{tc}}, t} (S_{tc}) - \epsilon_1 \max_{S \subseteq V_t : |S| \leq k} \min_{i \in [l]} \sigma_{i, t}(S) \\
        &\geq \min_{i \in [l]} f_{cv, i, t} (S_{tc}) - \epsilon_1 \max_{S \subseteq V_t : |S| \leq k} \min_{i \in [l]} \sigma_{i, t}(S) \\
        &\geq \left( 1 - 1/e - \gamma - 2\epsilon_1 \right) \max_{S \subseteq V_t : \left| S \right| \leq k} \min_{i \in [l]} f_{cv, i, t}(S) - \epsilon_2 \\
        &\geq \left( 1 - 1/e - \gamma - 3\epsilon_1 \right) \max_{S \subseteq V_t : \left| S \right| \leq k} \min_{i \in [l]} \sigma_{i, t}(S) - \epsilon_2
    \end{align*}
    The penultimate step above follows from the event in Corollary \ref{cor:union}. The last step holds since, defining $S_\sigma = \arg\max_{S \subseteq V_t : \left| S \right| \leq k} \min_{i \in [l]} \sigma_{i, t}(S)$,
    \begin{align*}
        & \max_{S \subseteq V_t : \left| S \right| \leq k} \min_{i \in [l]} f_{cv, i, t}(S) \geq \min_{i \in [l]} f_{cv, i, t}(S_\sigma) = f_{cv, i^*, t}(S_\sigma) \\
        &\geq \sigma_{i^*, t}(S_\sigma) - \epsilon_1 \max_{S \subseteq V_t : \left| S \right| \leq k} \min_{i \in [l]} \sigma_{i, t}(S) \\
        &\geq \min_{i \in [l]} \sigma_{i, t}(S_\sigma) - \epsilon_1 \max_{S \subseteq V_t : \left| S \right| \leq k} \min_{i \in [l]} \sigma_{i, t}(S) \\
        &= (1 - \epsilon_1) \max_{S \subseteq V_t : \left| S \right| \leq k} \min_{i \in [l]} \sigma_{i, t}(S)
    \end{align*}
\end{proof}

Using Lemma~\ref{lem:f2sigma}, we can now provide a guarantee for the minimum influence spread across the entire hyperparameter space $\Theta$. This, along with the running time, forms the overall result of this section.

\begin{theorem}
\label{th:i2theta}
    Let $\gamma$ be as defined in Lemma \ref{lem:greedy}. Let $c_p$ denote the maximum number of phases during which the size of the graph remains within a factor of 2 of the original. Within each such phase, let $c_r$ denote the number of stages within which the number of edges scanned to update $\mathcal{H}_{est,i}$ for each $i \in [l]$ remains below the threshold after which it restarts from scratch. Define $S_{tc} = \cup_{i=1}^T S_t^{(i)}$. With probability at least $1 - \delta_1 - \delta_2$, in every time step $t$,
    \begin{small}
    \begin{equation*}
        \min_{\theta \in \Theta} \sigma_{\theta, t} (S_{tc}) \geq \left( 1 - 1/e - \gamma - 3\epsilon_1 \right) \max_{S \subseteq V_t : |S| \leq k} \min_{\theta \in \Theta} \sigma_{\theta, t} (S) - 2\epsilon_2
    \end{equation*}
    \end{small}
    for $l \geq d \left( \frac{2Bdmn}{\epsilon_2} \right)^d \log \left( \frac{2Bdmn}{\epsilon_2\delta_2} \right)$ and $T \geq \frac{2 \log l}{\epsilon_2^{2}}$, with amortized running time $O \left( \frac{n^2 l^2 T c_p c_r}{k^2 \epsilon_1^3} \cdot \log n \cdot \log \left( \frac{n^{Tk}}{\delta_1} \right) \right)$.
\end{theorem}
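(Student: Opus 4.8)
The statement bundles two claims: a quality guarantee that lifts the per-sample bound of Lemma~\ref{lem:f2sigma} from the finite set $\{\theta_1,\ldots,\theta_l\}$ to the whole cube $\Theta$, and an amortized running time bound. I would prove them separately, starting with the quality bound since this is where the requirement on $l$ originates. The plan is to argue that the $l$ uniformly sampled hyperparameters form, with high probability, a net of $\Theta$ fine enough that replacing $\min_{\theta\in\Theta}$ by $\min_{i\in[l]}$ perturbs every influence value by at most $\epsilon_2$, and then to chain this with Lemma~\ref{lem:f2sigma}.

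The net rests on a Lipschitz chain. Since $H$ is $1$-Lipschitz in $\theta$ with respect to the $L_1$ norm, for any $\theta,\theta'$ each of the $m$ edge probabilities moves by at most $\|\theta-\theta'\|_1$, so $\|\mathbf{p}(\theta)-\mathbf{p}(\theta')\|_1 \leq m\|\theta-\theta'\|_1$; and since flipping a single edge changes the number of reachable nodes by at most $n$, every partial derivative $\partial\sigma_{\mathbf{p},t}/\partial p_e$ is bounded by $n$, making $\sigma$ $n$-Lipschitz in $\mathbf{p}$ under $L_1$. Composing gives $|\sigma_{\theta,t}(S)-\sigma_{\theta',t}(S)| \leq nm\,\|\theta-\theta'\|_1$ for every $S$. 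I would therefore tile $\Theta=[-B,B]^d$ with axis-aligned cubes of $L_1$-diameter at most $\epsilon_2/(nm)$, so that two hyperparameters sharing a cell agree to within $\epsilon_2$ on every seed set; the number of cells is $N=\Theta((2Bdmn/\epsilon_2)^d)$. A uniform sample misses a fixed cell with probability $1-1/N$, so by a union bound the $l$ samples fail to hit every cell with probability at most $N(1-1/N)^l \leq N e^{-l/N}$, which drops below the target once $l \geq dN\log(2Bdmn/(\epsilon_2\delta_2)) \geq N\log(N/\delta_2)$, exactly the stated requirement on $l$. On this net event, the minimizer $\theta^\star$ of $\sigma_{\theta,t}(S_{tc})$ shares a cell with some sample $\theta_i$, whence $\min_{i}\sigma_{\theta_i,t}(S_{tc}) \leq \sigma_{\theta^\star,t}(S_{tc})+\epsilon_2$, i.e.\ $\min_{\theta\in\Theta}\sigma_{\theta,t}(S_{tc}) \geq \min_i\sigma_{\theta_i,t}(S_{tc})-\epsilon_2$; in the reverse direction no net is needed, since restricting the inner minimization to the subsample only increases it, giving $\max_S\min_i\sigma_{\theta_i,t}(S) \geq \max_S\min_{\theta\in\Theta}\sigma_{\theta,t}(S)$. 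Substituting both facts into Lemma~\ref{lem:f2sigma} introduces one extra $\epsilon_2$ (hence $2\epsilon_2$) while preserving the factor $(1-1/e-\gamma-3\epsilon_1)$, and a union bound over the net event and the event of Lemma~\ref{lem:f2sigma} yields overall confidence $1-\delta_1-\delta_2$ after the routine reapportionment of the failure budgets, with $l$ controlling the net failure and $T$ the MWU failure.

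For the running time I would first bound the cost of a single restart (Alg.~\ref{alg:restart}) and then amortize. A restart builds, for each of the $l$ sampled hyperparameters, RR sets from every node under the sample budget $R=\Theta(\frac{n^2}{k^2\epsilon_1^2}\log\frac{n^{Tk}}{\delta_1})$ implied by lines~1--2, and then calls \textsc{Find-Seeds}, which runs $T$ MWU rounds, each invoking \textsc{Greedy} across the $\lceil\epsilon_1^{-1}\log n\rceil$ threads and evaluating, per greedy step, a convex combination of the $l$ coverage functions. This yields two factors of $l$ (building the $l$ coverage graphs and evaluating the $l$-term combination), a factor $T$ (MWU rounds), a factor $\epsilon_1^{-1}\log n$ (threads), and the factor $\frac{n^2}{k^2\epsilon_1^2}\log\frac{n^{Tk}}{\delta_1}$ from $R$; multiplied together these reproduce every factor of the claimed bound except $c_p c_r$. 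Finally, the doubling and halving guards in Algs.~\ref{alg:insert_node}--\ref{alg:remove_edge} ensure each restart is preceded by $\Omega(n_0)$ (or $\Omega(m_0)$) updates, and the $16Rm_0$ edge-scan threshold bounds the number of augment/reduce stages before a threshold-triggered restart; charging each restart's cost to the updates of its phase introduces precisely the $c_p c_r$ factor and completes the amortized bound.

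The conceptually delicate point is to apply the net only to the direction that needs it --- the inner minimization evaluated at the fixed set $S_{tc}$ --- while using monotonicity of the min over a subsample for the other direction, so that the covering error enters additively once and is not amplified by the multiplicative $(1-1/e-\gamma-3\epsilon_1)$ factor. The genuinely laborious part, which I expect to be the main obstacle, is the amortized running-time accounting: correctly charging the cost of RR-set augmentation and reduction and of the repeated \textsc{Find-Seeds} calls to network updates, and verifying that the restart triggers keep the overhead at the stated $c_p c_r$ multiple of the per-restart cost. This closely follows the potential/doubling analysis of \cite{peng2021dynamic} but must be carefully generalized to the present $l$-graph, $T$-round setting.
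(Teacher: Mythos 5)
Your proposal is correct and follows essentially the same route as the paper: the paper lifts Lemma~\ref{lem:f2sigma} from the sampled $\theta_i$'s to all of $\Theta$ by invoking Lemma~5 of \cite{kalimeris2019robust}, which is precisely the Lipschitz/covering-net argument you derive inline (with the same $l$-requirement calculation), and it obtains the amortized running time by the same phase/restart accounting you sketch, citing Lemmas~3.4, 3.6, 3.11 and 3.14 of \cite{peng2021dynamic} for the per-update costs. The only difference is that you re-derive what the paper cites, making your write-up a self-contained version of the same proof.
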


\begin{proof}
From Lemma \ref{lem:f2sigma} and from Lemma 5 of \cite{kalimeris2019robust}, the quality guarantee of Theorem \ref{th:i2theta} holds with probability at least $1 - \frac{\delta_1}{2} - \delta_2$.

Regarding the running time, note that the size of the network can be doubled for at most $c_p$ times, i.e. there are at most $c_p$ phases where the size of the network remains within a factor of two of that at the beginning of the phase. Within each of the above phases, our algorithm restarts for at most $c_r$ times, and on every such restart, the value of $p_i$ for each $i \in [l]$ is estimated once using Lines 3-9 of Algorithm 1. This, as also the construction of $\mathcal{H}_{est,i}$ and $\mathcal{H}_{cv,i}$, each has $O(k \epsilon_1^{-2} \log(n/\delta))$ amortized time per network update with probability at least $1 - \frac{\delta_1}{2}$; this follows from Lemmas 3.4, 3.6 and 3.11 of \cite{peng2021dynamic}. Note that once the $\mathcal{H}_{cv,i}$s are updated, the seed set also needs to be updated using Algorithm 2. Analogous to Lemma 3.14 of \cite{peng2021dynamic}, it can be shown that the amortized running time of Algorithm 3 is $O(\epsilon_1^{-1} \log n)$, which means the one for Algorithm 2 is $O \left( \frac{lT}{\epsilon_1} \log n \right)$. Thus, the overall running time is
\begin{align*}
    & O \left( l c_p c_r \left( \frac{2k}{\epsilon_1^2} \log \left( \frac{n}{\delta} \right) + \frac{k}{\epsilon_1^2} \log(n/\delta) \cdot \frac{lT}{\epsilon_1} \log n \right) \right) \\
    & \leq O \left( \frac{n^2 l^2 T c_p c_r}{k^2 \epsilon_1^3} \cdot \log n \cdot \log \left( \frac{n^{Tk}}{\delta_1} \right) \right)
\end{align*}

From the union bound, both the quality guarantee and the running time hold with probability at least $1 - \delta_1 - \delta_2$.
\end{proof}

\spara{Remarks:} 
{\bf (1)} Th.~\ref{th:i2theta} provides minimum values of $l$ and $T$ needed for a desired solution quality, with higher values leading to better solutions. However, in our experiments (see \S~\ref{sec:sensitivity}), we find that we can obtain reasonably good results even with much smaller values of $l$ and $T$, and the solution quality converges (remains nearly the same) beyond those values. So in practice we can run \algoname\ for much smaller $l$ and $T$ (and faster), while achieving a reasonably good solution. {\bf (2)} 
For the incremental setting, 
the values of $c_p$ and $c_r$ in Th.~\ref{th:i2theta} are at most $O(\log n)$ \cite{peng2021dynamic}. Also, according to Line \ref{line:cond_k} of Alg. \ref{alg:greedy_insert}, once all threads' seed sets have $k$ nodes each, they will not be further updated (since they can only increase in size), which means we can also avoid bookkeeping computations like updating the marginal gain of every node for subsequent seed additions. However, this cannot be done for the fully dynamic setting because, in that case, even if a thread's seed set has $k$ nodes, an edge deletion will require replacement of one seed with another, which requires the marginal gains of each node to be up to date (Alg. \ref{alg:greedy_remove}). These factors account for the running time being longer for the fully dynamic setting than for the incremental one, as proven empirically in \S\ref{sec:exp}.

\section{Experiments}
\label{sec:exp}
We evaluate next \algoname's effectiveness and efficiency. It was implemented in C++, run on a Linux server with 2 GHz AMD CPU, 512GB RAM. 

\begin{figure*}
    \centering
    \begin{subfigure}{0.195\textwidth}
    \centering
    \includegraphics[scale=0.21]{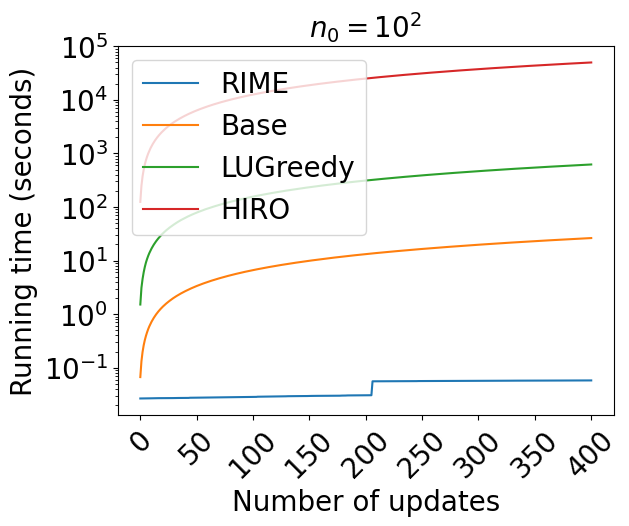}
    \end{subfigure}
    \begin{subfigure}{0.195\textwidth}
    \centering
    \includegraphics[scale=0.21]{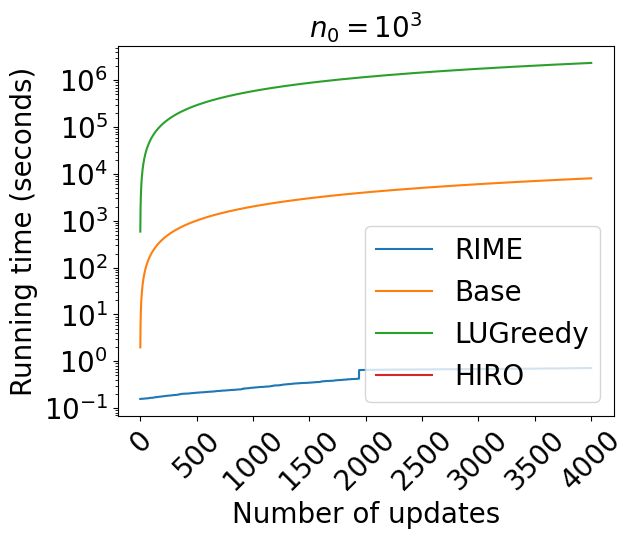}
    \end{subfigure}
    \begin{subfigure}{0.195\textwidth}
    \centering
    \includegraphics[scale=0.21]{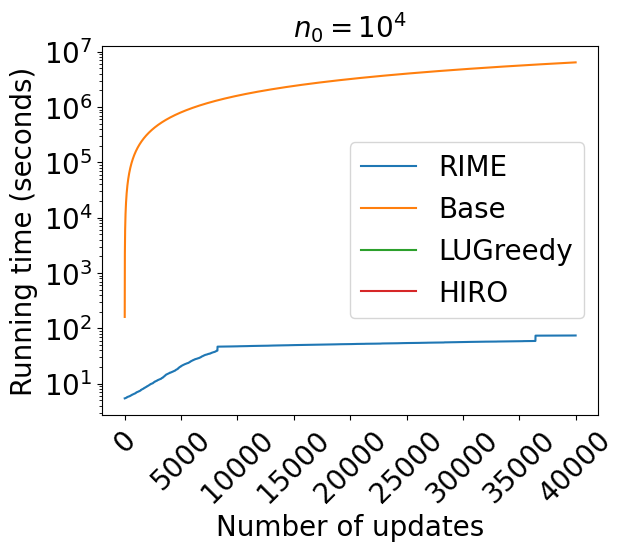}
    \end{subfigure}
    \begin{subfigure}{0.195\textwidth}
    \centering
    \includegraphics[scale=0.21]{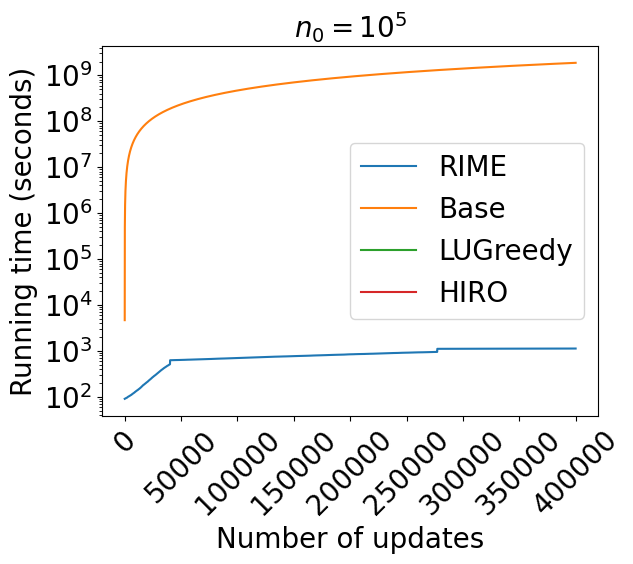}
    \end{subfigure}
    \begin{subfigure}{0.195\textwidth}
    \centering
    \includegraphics[scale=0.21]{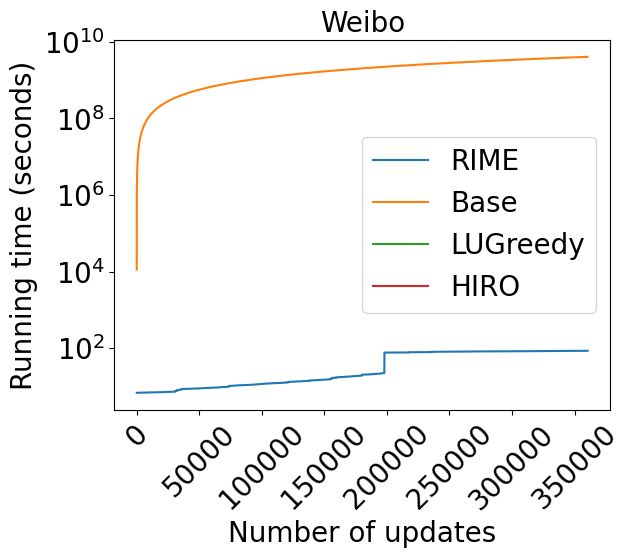}
    \end{subfigure}
    \begin{subfigure}{0.195\textwidth}
    \centering
    \includegraphics[scale=0.21]{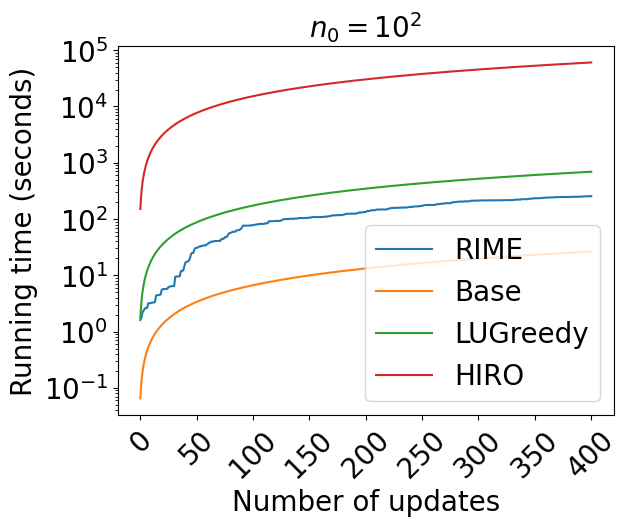}
    \end{subfigure}
    \begin{subfigure}{0.195\textwidth}
    \centering
    \includegraphics[scale=0.21]{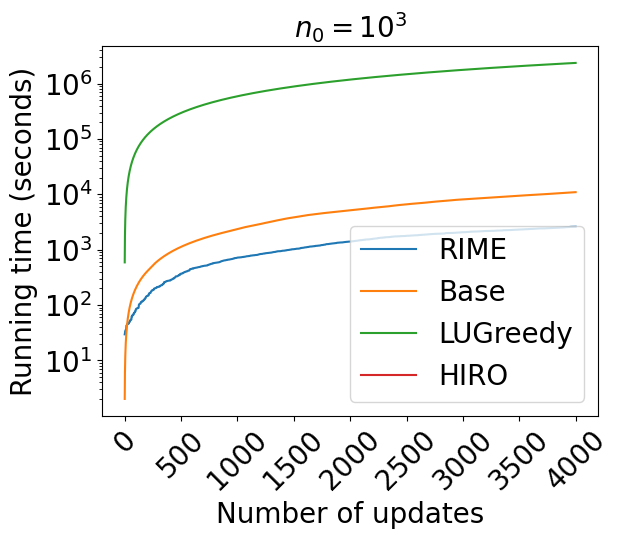}
    \end{subfigure}
    \begin{subfigure}{0.195\textwidth}
    \centering
    \includegraphics[scale=0.21]{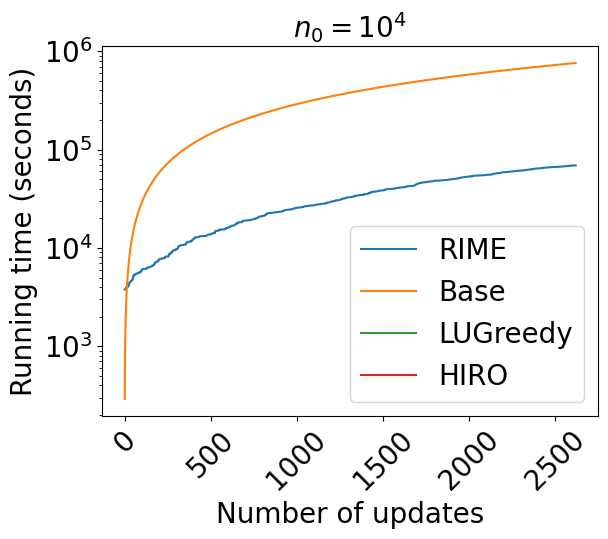}
    \end{subfigure}
    \begin{subfigure}{0.195\textwidth}
    \centering
    \includegraphics[scale=0.21]{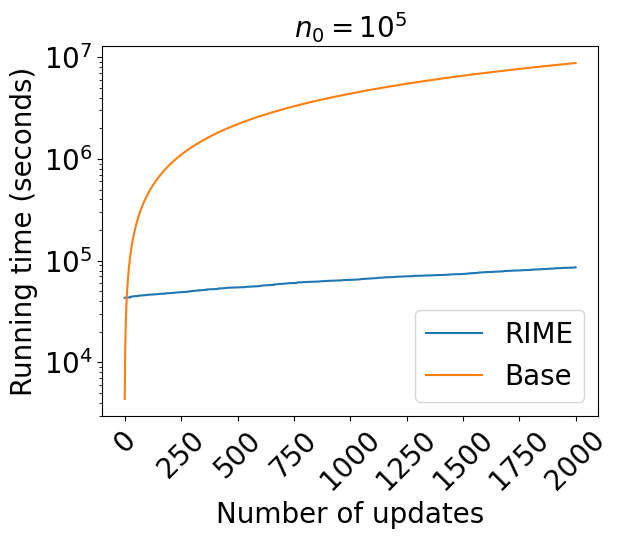}
    \end{subfigure}
    \begin{subfigure}{0.195\textwidth}
    \centering
    \includegraphics[scale=0.21]{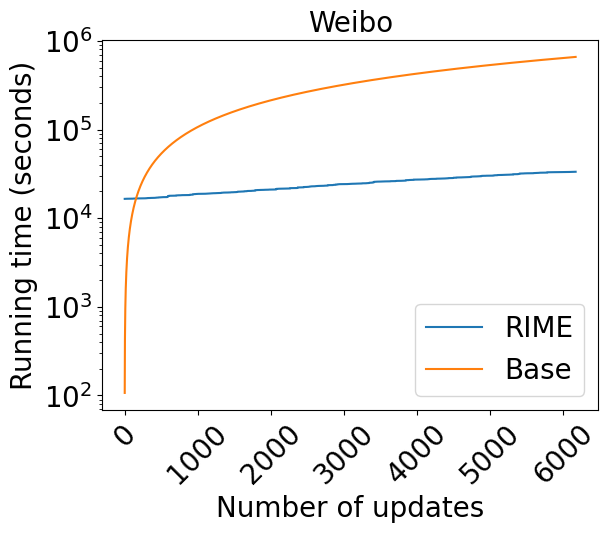}
    \end{subfigure}
    \caption{\small{Running time comparison on synthetic / real-world networks for incremental (top row) / fully dynamic (bottom row) settings.}}
    \label{fig:time}
\end{figure*}

\subsection{Experimental Setup}
\label{sec:setup}
\spara{Networks.} We generate synthetic random networks with $n_0 = 10^x$, $x \in \{2, 3, 4, 5\}$, and $m_0 = 2n_0$. We also test our methods on the real-world Weibo social network \cite{zhang2013social}. From the original dataset, we extract the largest connected component of the subgraph containing edges with at least 30 reposts between the corresponding users. The resultant network has $n_0 = 99523$ and $m_0 = 180216$. In the incremental setting, we also test our methods on the social network LiveJournal \cite{backstrom2006group} ($n_0 = 4847571, m_0 = 68993773$) and two more synthetic networks ($n_0 \in \{10^6, 10^7\}, m_0 = 2n_0$).

\spara{Hyperparametric model.} 
In each of our networks, we use the sigmoid function as the hyperparametric model to determine the edge probabilities, i.e., $H(\theta, x_e) = 1 / [1 + \exp{(-\theta^T x_e)}]$ as in \cite{kalimeris2019robust}.

\spara{Features.} For our synthetic networks, following \cite{kalimeris2019robust}, we generate 3 random features for each node, for a 6-dimensional hyperparameter for each edge. As in \S\ref{sec:robust}, the hyperparameter space is a $d$-cube centred at the origin. In  Weibo, we use as features user attributes (followers / friends / status counts, gender, verified status). The categorical features are binarized. \eat{We assume that} The resultant hyperparameter space, with dimension $d = 16$, is a $d$-cube centred at the value for which the likelihood of the repost cascade data (available with the dataset) is maximized \cite{kalimeris2018learning}. This makes the space of possible edge probabilities more realistic than that with the hypercube centred at the origin, while not affecting our algorithms in any way.

\spara{Network updates.} For each of our networks initially with $n_0$ nodes and $m_0$ edges, we generate $2m_0$ random updates of all possible kinds (node / edge insertion / removal) for both the incremental setting and the fully dynamic setting.

\spara{Methods compared.} For both the incremental setting and the fully dynamic one, we compare \algoname\ with baselines that run the following methods from scratch after each network update:\footnote{Another option is to run a baseline after a batch of updates instead of after every update. However, that results in the baseline's solution quality getting much  poorer, e.g., from $165.574$ to $115$ with a batch of size $100$ on the graph with $n_0 = 10^3$.}
(i) HIRO (resp. (ii) BASE), which runs the method in \cite{kalimeris2019robust} by generating RR sets from $\theta$ randomly chosen nodes (resp. once from every node) for each MWU iteration; and 
(iii) LUGreedy \cite{chen2016robust}, which assumes confidence intervals for the edge probabilities, runs the greedy algorithm twice over both the lower and the upper boundaries of the intervals, and returns the better of the two above solutions assuming the lower boundaries as true probabilities.

\spara{Metrics.} We compare \algoname's efficiency against the baselines on \emph{total running time} (across all network updates). We compare \algoname's quality by computing, after all updates, the minimum (approximate) influence spread of the returned seed sets across the sampled hyperparameter values.\footnote{Each value is an average over 10 algorithm runs, generally with a small std. dev.}

\spara{Parameters.} We vary the \textit{number of seeds} $k \in \{10, 25, 50\}$ \cite{kalimeris2019robust}\eat{, with $k = 50$ being the default value}. We vary the \textit{radius of the hyperparameter space} $B \in \{0, 0.25, 0.5, 0.75, 1\}$ and study its effect on the values of $l$ and $T$ in practice (the theoretical dependence is shown in Theorem \ref{th:i2theta}). We set $B = 1$ by default.

\subsection{Results}
\label{sec:results}

\begin{figure}[tb!]
    \centering
    \begin{subfigure}[t]{0.22\textwidth}
    \centering
    \includegraphics[scale=0.26]{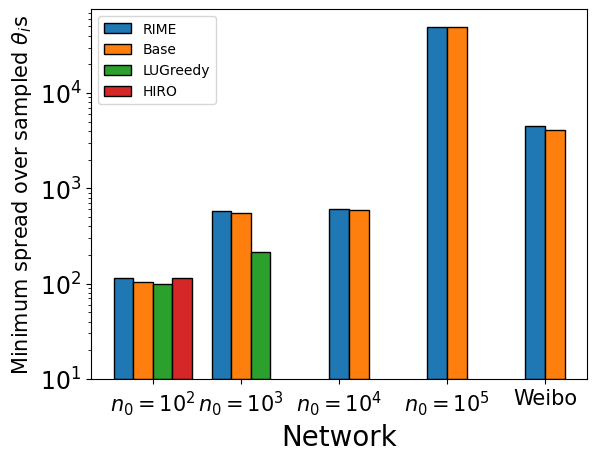}
    \end{subfigure}
    \begin{subfigure}[t]{0.22\textwidth}
    \centering
    \includegraphics[scale=0.26]{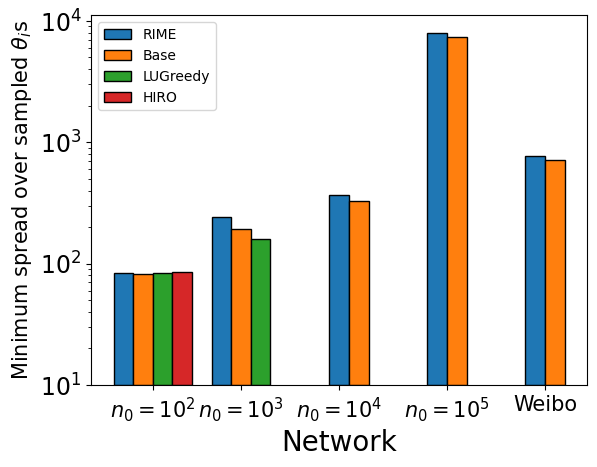}
    \end{subfigure}
    \caption{\small{Solution quality comparison on synthetic / real-world networks for incremental (left) and fully dynamic (right) settings.}}
    \label{fig:score}
\end{figure}

\spara{Efficiency.} We compare the running time of \algoname\ against the baselines in Fig. \ref{fig:time} and Figs. \ref{fig:more}\cref{fig:more_time1,fig:more_time2,fig:more_time3}. A missing line means that the corresponding method did not finish in a day while running on the very first graph without any updates. It is clear that, in most cases, \algoname\ is several orders of magnitude faster than the competitors. Also, for the incremental setting \algoname\ is 2-3 orders of magnitude faster than that for the fully dynamic setting. This is explained by the fact that we can avoid several bookkeeping computations for the incremental setting (see the last paragraph of \S\ref{sec:analysis}). Observe the jumps in the plots for \algoname. They correspond to \algoname\ invoking the algorithm restarting from scratch (Alg. 1). Notice that in the incremental setting, the slope of each line for \algoname\eat{, i.e., the time required by each method corresponding to a network update,} decreases after each restart. This can be explained as follows. Later restarts take place on bigger graphs than earlier ones. Thus, after later restarts, when an edge $(u, v)$ is added, RR sets containing $v$ are more likely to contain $u$ as well, compared to earlier restarts, since there are more ways of reaching $u$ on a bigger graph than on a smaller one. Note that only those RR sets containing $v$ but not $u$ need to be augmented (\S\ref{sec:algorithm}). This leads to fewer RR set augmentations after later restarts, thereby lowering the running time. Finally, we would like to point out that \algoname\ scales much better than our baselines, at least up to graphs with $10^5$ nodes.\footnote{The problem we study in this paper has several well-documented sources of hardness. Therefore, while \algoname\ may not scale to large, Twitter-like diffusion networks, it remains reasonably applicable for networks with node counts of orders of tens-of-thousands to hundreds-of-thousands. However, we believe this scale is entirely relevant for many application scenarios, including the ones we mention in \S\ref{sec:intro}.}

\begin{figure}[tb!]
    \centering
    \begin{subfigure}[t]{0.22\textwidth}
    \centering
    \includegraphics[scale=0.25]{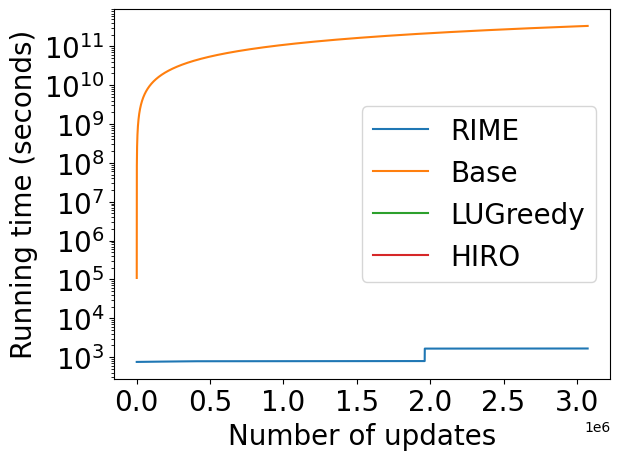}
    \caption{\small{Running time, $n_0 = 10^6$}}
    \label{fig:more_time1}
    \end{subfigure}
    \begin{subfigure}[t]{0.22\textwidth}
    \centering
    \includegraphics[scale=0.25]{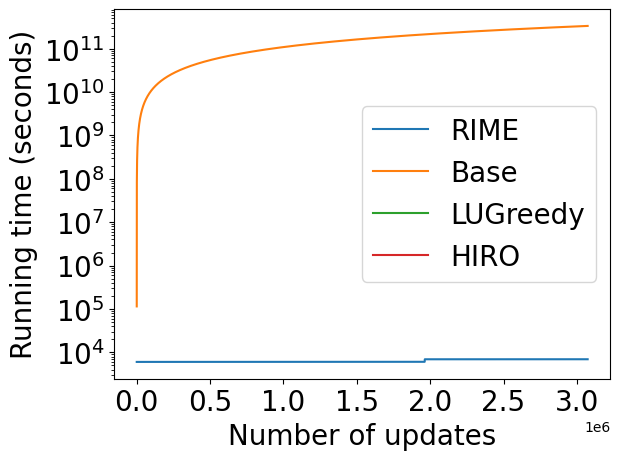}
    \caption{\small{Running time, $n_0 = 10^7$}}
    \label{fig:more_time2}
    \end{subfigure}
    \begin{subfigure}[t]{0.22\textwidth}
    \centering
    \includegraphics[scale=0.25]{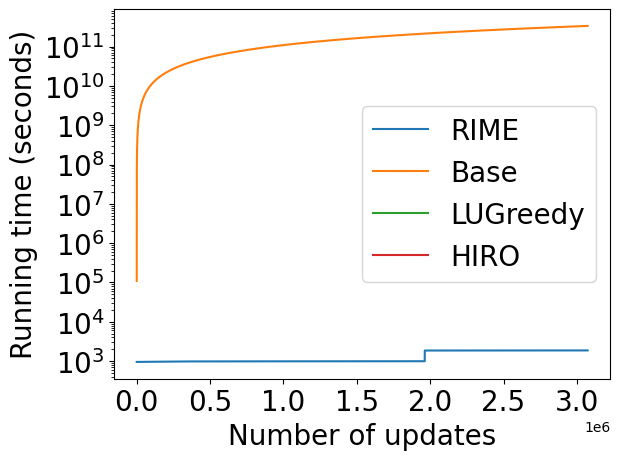}
    \caption{\small{Running time, LiveJournal}}
    \label{fig:more_time3}
    \end{subfigure}
    \begin{subfigure}[t]{0.22\textwidth}
    \centering
    \includegraphics[scale=0.25]{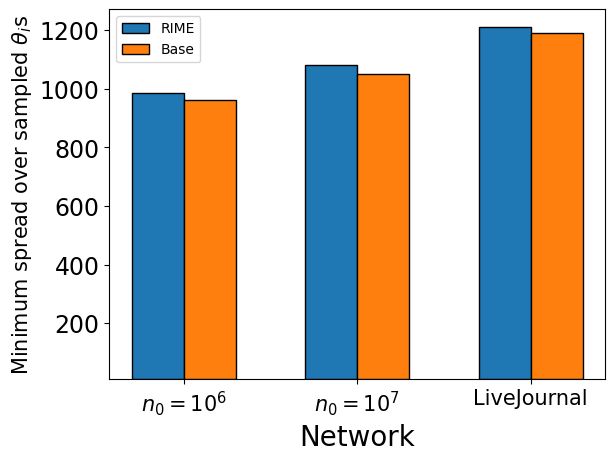}
    \caption{\small{Solution quality}}
    \label{fig:more_quality}
    \end{subfigure}
    \caption{\small{Running time and solution quality comparison on more networks in the incremental setting.}}
    \label{fig:more}
\end{figure}

\spara{Effectiveness.} The solution quality comparisons for both settings are shown in Figs. \ref{fig:score} and \ref{fig:more_quality}. A missing bar means that the corresponding method did not finish in a day while running on the initial graph without any updates. Clearly, the best baseline is HIRO, which does not scale to graphs with $10^3$ nodes or more. Even on the graphs on which HIRO does complete, \algoname's quality is comparable to HIRO's. LUGreedy returns lower-quality solutions than \algoname, while also not scaling to graphs with $10^4$ nodes or more. BASE finishes on all our graphs, but returns slightly lower-quality solutions than \algoname, while being orders of magnitude slower.

\eat{
To summarize, \algoname\ is orders of magnitude faster than the baselines (\emph{efficient}), while returning solutions of comparable quality (\emph{effective}). 
}

\begin{figure}[tb!]
    \centering
    \begin{subfigure}[t]{0.22\textwidth}
    \centering
    \includegraphics[scale=0.26]{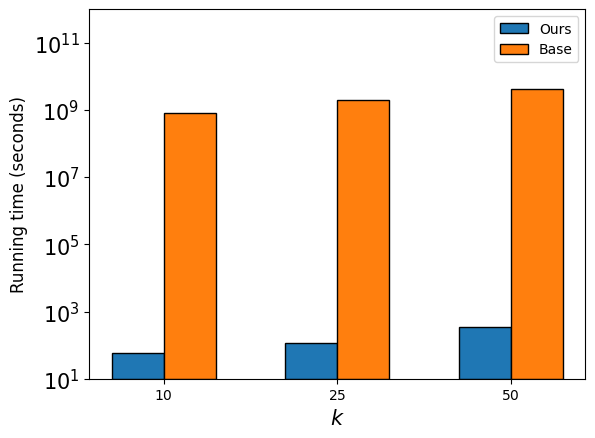}
    \caption{\small{Running time}}
    \end{subfigure}
    \begin{subfigure}[t]{0.22\textwidth}
    \centering
    \includegraphics[scale=0.26]{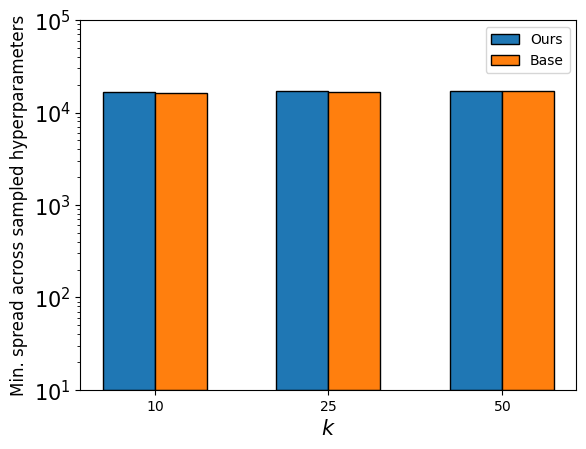}
    \caption{\small{Solution quality}}
    \end{subfigure}
    \caption{\small{Variation of running time and solution quality with $k$ on the Weibo network in the incremental setting.}}
    \label{fig:varyk}
\end{figure}

\begin{figure}[tb!]
    \centering
    \begin{subfigure}[t]{0.22\textwidth}
    \centering
    \includegraphics[scale=0.26]{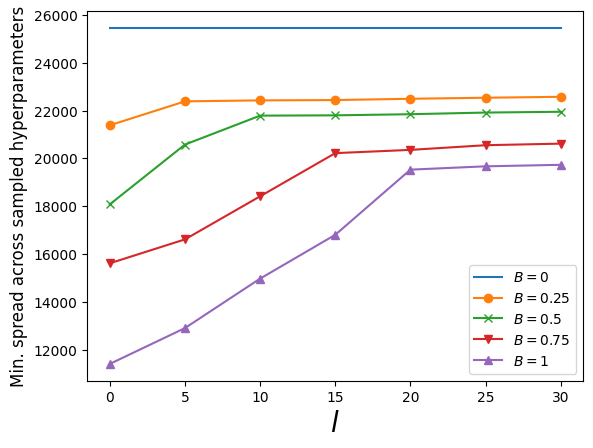}
    \caption{\small{Impact on $l$}}
    \end{subfigure}
    \begin{subfigure}[t]{0.22\textwidth}
    \centering
    \includegraphics[scale=0.26]{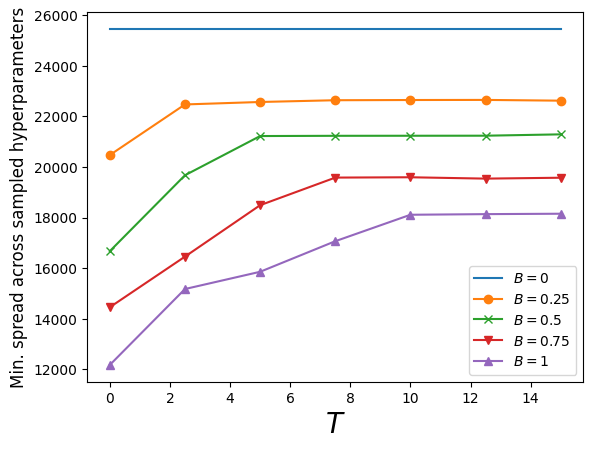}
    \caption{\small{Impact on $T$}}
    \end{subfigure}
    \caption{\small{Variation of solution quality with $l$ and $T$ for different values of $B$ on the Weibo network in the incremental setting.}}
    \label{fig:varyb}
\end{figure}

\subsection{Parameter Sensitivity Analysis}
\label{sec:sensitivity}

\spara{Impact of $k$.} Figure \ref{fig:varyk} shows the effect of $k$ on the total running time and the quality of our solution after all updates in Weibo. Clearly, for every value of $k$, our method is orders of magnitude faster than BASE, while returning a solution of similar quality. As before, LUGreedy and HIRO are missing for each $k$ because they did not finish in a day while running on the very first graph without any updates.

\spara{Impact of $B$ on $l$ and $T$.} For different values of $B$, we study the variation of the solution quality with $l$ and $T$. Figure \ref{fig:varyb} shows that the quality converges after very small values of $l$ and $T$, even though theoretically higher values should lead to better solutions. Notice that the quality reduces with $B$, since the minimum influence spread over a larger hyperparameter space is lower than that over a smaller one. Also, we find that there is a relation between the converging values of $l$ and $T$ for different values of $B$ (though not exactly the same as shown in Theorem \ref{th:i2theta}). For instance, with $B = 1$, we find that convergence is achieved at $l = 20$ and $T = 10$, while with $B = 0.5$, the same is achieved at $l = 10$ and $T = 5$. Thus, for any dataset, once we find the converging values of $l$ and $T$ for $B = 1$ (which corresponds to the highest level of uncertainty about the edge probabilities), we can infer the same for other values of $B$.

\section{Conclusion}
We studied the problem of robust influence maximization over dynamic diffusion networks, where the diffusion probabilities are a function of the node features and a global hyperparameter, and the network evolves by node / edge insertions / deletions. 
\eat{
In the incremental setting,  when nodes and edges are only added to the network,}
We proposed an approximation solution using multiplicative weight updates and a greedy algorithm, with theoretical quality guarantees. 
\eat{
For the fully dynamic setting, i.e., when both node / edge insertions and deletions can occur, we propose a similar, albeit heuristic solution. 
}
We empirically assessed our method's performance on synthetic \& real networks and found that it is generally orders of magnitude faster than the baselines, while returning a seed set of comparable quality. A potential future direction is extending the \algoname\ framework to other diffusion models (e.g. Linear Threshold and triggering models) and other forms of network dynamicity. 

\section*{Acknowledgments}
Arkaprava Saha's research was supported by the National Research Foundation, Prime Minister’s Office, Singapore under its Campus for Research Excellence and Technological Enterprise (CREATE) programme. Laks V.S. Lakshmanan's research was supported in part by a grant from the Natural Sciences and Engineering Research Council of Canada (Grant Number RGPIN-2020-05408). The computational work for this article was performed fully on resources of the National Supercomputing Centre, Singapore (https://www.nscc.sg).

\bibliography{aaai25}



\end{document}